\documentclass[
a4paper,%
pagesize,%
numbers=noendperiod,%
captions=nooneline,%
abstracton%
]{scrartcl}

\newcommand*{\mytitle}{Estimators\ for\ Archimedean\ copulas\ in\ high\
  dimensions}
\newcommand*{\myauthorone}{Marius\ Hofert}
\newcommand*{\mycontactone}{RiskLab,\ Department\ of\ Mathematics,\ ETH\ Zurich,\
  8092\ Zurich,\ Switzerland,\
  \href{mailto:marius.hofert@math.ethz.ch}{\nolinkurl{marius.hofert@math.ethz.ch}}}
\newcommand*{\mycomment}{The\ author\ (Willis\ Research\ Fellow)\ thanks\ Willis\ Re\ for\ financial\ support\ while\ this\ work\ was\ being\ completed.}
\newcommand*{\myauthortwo}{Martin\ M\"achler}
\newcommand*{\mycontacttwo}{Seminar\ f\"ur\ Statistik,\ ETH\ Zurich,\
  8092\ Zurich,\ Switzerland,\
  \href{mailto:maechler@stat.math.ethz.ch}{\nolinkurl{maechler@stat.math.ethz.ch}}}
\newcommand*{\myauthorthree}{Alexander\ J.\ McNeil}
\newcommand*{\mycontactthree}{Department\ of\ Actuarial\ Mathematics\ and\ Statistics,\
  Heriot-Watt\ University,\ Edinburgh,\ EH14 4AS,\ Scotland,\
  \href{mailto:A.J.McNeil@hw.ac.uk}{\nolinkurl{A.J.McNeil@hw.ac.uk}}}
\newcommand*{\mysubject}{Article}

\usepackage[T1]{fontenc}%
\usepackage{lmodern}%
\usepackage[american]{babel}%
\usepackage{microtype}%

\usepackage[nouppercase]{scrpage2}%
\usepackage{geometry}%
\usepackage{typearea}%

\usepackage{eso-pic}%
\usepackage{rotating}%
\usepackage{amsmath}%
\usepackage{mathtools}%
\usepackage{amssymb}%
\usepackage{amsthm}%
\usepackage{etoolbox}%
\usepackage{bm}%
\usepackage{bbm}%
\usepackage{enumitem}%
\usepackage{graphicx}%
\usepackage{grffile}%
\usepackage{tikz}%
\usepackage{wrapfig}%
\usepackage{tabularx}%
\usepackage{dcolumn}%
\usepackage{booktabs}%
\usepackage{multirow}%
\usepackage{fancyvrb}%
\usepackage{vruler}%
\usepackage{csquotes}%
\usepackage[
backend=bibtex,%
style=authoryear,%
dashed=false,%
uniquename=init,%
firstinits=true,%
hyperref=true,%
maxcitenames=2,%
maxbibnames=6,%
date=iso8601,%
urldate=iso8601%
]{biblatex}%
\usepackage[
hypertexnames=false,%
setpagesize=false,%
pdfborder={0 0 0},%
pdfstartview=Fit,%
bookmarksopen=true,%
bookmarksnumbered=true%
]{hyperref}%
\hypersetup{
  pdfauthor=\myauthorone,%
  pdftitle=\mytitle,%
  pdfsubject=\mysubject%
}

\pagestyle{scrheadings}%
\setkomafont{pageheadfoot}{\normalfont\normalcolor\sffamily}%
\automark{section}%
\setcounter{secnumdepth}{3}%
\setkomafont{pagenumber}{\normalfont\normalcolor\sffamily}%
\setkomafont{captionlabel}{\normalfont\normalcolor\sffamily\bfseries}%

\setlist{%
  align=left,%
  labelsep=*,%
  leftmargin=*,%
  topsep=1mm,%
  itemsep=0mm%
}
\setlist[itemize,1]{label={\protect\rule[0.18em]{0.36em}{0.36em}\ }}%
\setlist[itemize,2]{label={\protect\raisebox{0.12em}{\resizebox{0.48em}{0.48em}{$\blacktriangleright$}}\ }}%
\setlist[itemize,3]{label={\protect\rule[0.32em]{0.62em}{0.08em}\ }}%
\setlist[enumerate,1]{label=\arabic*)}%
\setlist[enumerate,2]{label=\arabic{enumi}.\arabic*)}%
\setlist[enumerate,3]{label=\arabic{enumi}.\arabic{enumii}.\arabic*)}%

\makeatletter
\newcommand\myisodate{\number\year-\ifcase\month\or 01\or 02\or 03\or 04\or 05\or 06\or 07\or 08\or 09\or 10\or 11\or 12\fi-\ifcase\day\or 01\or 02\or 03\or 04\or 05\or 06\or 07\or 08\or 09\or 10\or 11\or 12\or 13\or 14\or 15\or 16\or 17\or 18\or 19\or 20\or 21\or 22\or 23\or 24\or 25\or 26\or 27\or 28\or 29\or 30\or 31\fi}%
\makeatother
\newcolumntype{d}[2]{D{.}{.}{#1.#2}}%
\newcommand*{\abstractnoindent}{}%
\let\abstractnoindent\abstract
\renewcommand*{\abstract}{\let\quotation\quote\let\endquotation\endquote
  \abstractnoindent}
\deffootnote[1em]{1em}{1em}{\textsuperscript{\thefootnotemark}}%

\setlength{\bibhang}{1em}%
\DeclareNameAlias{sortname}{last-first}%
\DefineBibliographyExtras{american}{\DeclareQuotePunctuation{}}%
\renewbibmacro*{volume+number+eid}{%
  \setunit*{\addcomma\space}%
  \printfield{volume}%
  \printfield{number}}
\DeclareFieldFormat*{number}{(#1)}
\DeclareFieldFormat*{title}{#1}%
\DeclareFieldFormat{doi}{%
  \ifhyperref
    {\href{http://dx.doi.org/#1}{\nolinkurl{doi:#1}}}%
    {\nolinkurl{doi:#1}}}%
\renewbibmacro*{in:}{}%
\DeclareFieldFormat{isbn}{ISBN #1}%
\DeclareFieldFormat{pages}{#1}%
\DeclareFieldFormat{url}{\url{#1}}%
\DeclareFieldFormat{urldate}{\mkbibparens{#1}}%
\bibliography{./mybib}%
\renewcommand*{\cite}[2][]{\textcite[#1]{#2}}

\newif\ifstarttheorem
\newtheoremstyle{mythmstyle}%
{0.5em}%
{0.5em}%
{}%
{}%
{\sffamily\bfseries\global\starttheoremtrue}%
{}%
{\newline}%
{\thmname{#1}\ \thmnumber{#2}\ \thmnote{(#3)}}%
\theoremstyle{mythmstyle}%
\newtheorem{definition}{Definition}[section]%

\newtheorem{lemma}[definition]{Lemma}

\newtheorem{example}[definition]{Example}

\renewcommand*\proofname{Proof}
\makeatletter%
\renewenvironment{proof}[1][\proofname]{\par
  \pushQED{\qed}%
  \normalfont\topsep2\p@\@plus2\p@\relax
  \trivlist
\item[\hskip\labelsep
  \sffamily\bfseries #1]\mbox{}\hfill\\*\ignorespaces
}{%
  \popQED\endtrivlist\@endpefalse
}
\makeatother

\makeatletter
\preto\itemize{%
  \if@inlabel
    \ifstarttheorem
      \mbox{}\par\nobreak\vskip\glueexpr-\parskip-\baselineskip+0.3em\relax\hrule\@height\z@
      \global\starttheoremfalse
    \fi
  \fi}
\preto\enditemize{\global\starttheoremfalse}
\makeatother
\makeatletter
\preto\enumerate{%
  \if@inlabel
    \ifstarttheorem
      \mbox{}\par\nobreak\vskip\glueexpr-\parskip-\baselineskip+0.3em\relax\hrule\@height\z@
      \global\starttheoremfalse
    \fi
  \fi}
\preto\endenumerate{\global\starttheoremfalse}
\makeatother

\newcommand*{\R}{\textsf{R}}
\newcommand*{\pkg}[1]{\texttt{#1}}
\newcommand*{\code}[1]{\texttt{#1}}

\newcommand{\T}{^{\top}}

\newcommand*{\I}{\mathbbm{1}}
\newcommand*{\IN}{\mathbbm{N}}

\newcommand*{\IR}{\mathbbm{R}}

\newcommand*{\IP}{\mathbbm{P}}
\newcommand*{\IE}{\mathbbm{E}}

\newcommand*{\arginf}{\operatorname*{arginf}}
\newcommand*{\argsup}{\operatorname*{argsup}}

\newcommand*{\sign}{\operatorname*{sign}}

\newcommand*{\Li}[1]{\sideset{}{_{#1}}{\operatorname*{Li}}}
\renewcommand*{\th}{\bm{\theta}}

\newcommand*{\psii}{{\psi^{-1}}}
\newcommand*{\psis}[2]{{\psi_{#1}^{#2}}}
\newcommand*{\psiis}[1]{{\psi_{#1}^{-1}}}

\newcommand*{\LS}{\mathcal{LS}}
\newcommand*{\LSi}{\LS^{-1}}

\newcommand*{\U}{\operatorname*{U}}

\newcommand*{\MDE}[2]{{\text{MDE}_{#1}^{\text{#2}}}}
\newcommand*{\var}[3]{{{#1}^{\text{#2}}_{#3}}}

\hyphenation{Ar-chi-me-dean}

\begin{document}
\thispagestyle{plain}
\begin{center}
  \sffamily
  {\bfseries\LARGE\mytitle\par}
  \bigskip\smallskip
  {\Large\myauthorone\footnote{\mycontactone. \mycomment},\ \myauthortwo\footnote{\mycontacttwo},\ \myauthorthree\footnote{\mycontactthree}\par
    \bigskip
    \myisodate\par}
\end{center}
\par\bigskip
\begin{abstract}
  The performance of known and new parametric estimators for Archimedean copulas
  is investigated, with special focus on large dimensions and numerical
  difficulties. In particular, method-of-moments-like estimators based on
  pairwise Kendall's tau, a multivariate extension of Blomqvist's beta, minimum
  distance estimators, the maximum-likelihood estimator, a simulated
  maximum-likelihood estimator, and a maximum-likelihood estimator based on the
  copula diagonal are studied. Their performance is compared in a large-scale
  simulation study both under known and unknown margins (pseudo-observations),
  in small and high dimensions, under small and large dependencies, various
  different Archimedean families and sample sizes. High dimensions up to one
  hundred are considered for the first time and computational problems arising
  from such large dimensions are addressed in detail. All methods are
  implemented in the open source \R{} package \pkg{copula} and can thus be
  easily accessed and studied.
\end{abstract}
\minisec{Keywords}
Archimedean copulas, parameter estimation, Kendall's tau, Blomqvist's beta, minimum distance estimators, (diagonal/simulated) maximum-likelihood estimation.
\minisec{MSC2010}
62H12, 62F10, 62H99, 62H20, 65C60.%

\section{Introduction}
A \textit{copula} is a multivariate distribution function with
standard uniform univariate margins. An important class of copulas,
known as \textit{Archimedean copulas}, is given by
\begin{align*}
  C(\bm{u})=\psi(\psii(u_1)+\dots+\psii(u_d)),\ \bm{u}\in[0,1]^d,
\end{align*}
with \textit{generator} $\psi$. In practical applications, $\psi$ belongs to a
parametric family $(\psi_{\th})_{\th\in\Theta}$ whose parameter vector $\th$ needs to be estimated. The aims of this paper are two-fold:
\begin{enumerate}
\item To carry out a large-scale comparative study of estimation methods for Archimedean copulas for the first time, both under known and unknown margins (pseudo-observations);
\item To focus on the performamce of estimators in high dimensions, where
  considerable computational challenges (which are also addressed) have to be overcome.
\end{enumerate}

Although Archimedean copulas are exchangeable and therefore often criticized by
the scientific community because of this limitation, they are frequently used in practice; see
\cite{embrechtshofert2011c} for a discussion. Also, from a
theoretical point of view, they often serve as building blocks for more flexible
and asymmetric dependence structures (for example,
vine copulas, nested Archimedean copulas, Archimedean Sibuya copulas, Khoudraji-
or Liebscher-transformed copulas). The questions we address in this paper also
affect these (and other) dependence structures, already in much smaller dimensions
such as two to five, and have led to wrong statements
in the literature and inaccuracies as well as errors in the corresponding computations. Our accurate
computations allow us to investigate Archimedean copulas even in high
dimensions such as one hundred. To the best of our knowledge, estimating
Archimedean copulas in such large dimensions has not been considered
before (and rarely for copulas in general). As will become clear from carefully
reading this work, this is not merely another computational study. Considerable
amount of time has gone into research on how the presented estimators can be
accurately computed (including tests with high-precision arithmetic to verify
the results) and the computational power required to conduct the studies has
been high. It is more than likely that issues of this type become more important
in the future as copula models in higher dimensions become more and more of
interest, not only for practitioners. Our computations will also point out interesting (as partly
surprising) results, which might lead to further research in this direction.

There are several known approaches for estimating bivariate parametric
Archimedean copula families. Assuming the copula density to exist,
maximum-likelihood estimation is one option; see \cite{genestghoudirivest1995}
or \cite{tsukahara2005}. Another estimator resembles the method-of-moments
estimator and consists of choosing the copula parameter such that a certain
dependence measure, for example, Kendall's tau, equals its empirical
counterpart; see \cite{genestrivest1993}. Although there is no theoretical
justification for applying this method in more than two dimensions, using the
mean of pairwise empirical Kendall's taus and estimating the copula parameter
such that the population version of Kendall's tau equals this mean also appears
in the literature; see \cite{berg2009} or \cite{savutrede2010}. A similar but
different estimator is applied in \cite{kojadinovicyan2010a}. Another method in
higher dimensions based on the moments of the Kendall distribution function is
given in \cite{brahiminecir2011}. Other estimation methods include approximating
the probability integral transform with splines and using a minimum distance
approach between this distribution function and an empirical counterpart; see
\cite{dimitrovakaishevpenev2008}. Splines also appear in \cite{lambert2007} for
approximating a certain ratio involving the generator of the Archimedean copula
to be estimated. \cite{tsukahara2005} considers minimum distance estimators
based on Cram\'er-von-Mises or Kolmogorov-Smirnov distances and compares their
performance to rank approximate Z-estimators in a simulation study involving the
bivariate Archimedean Clayton, Frank, and Gumbel copula. Another estimation
procedure in the bivariate case is given by \cite{quzhoushen2010} based on
minimizing a Cram\'er-von Mises distance between the empirical distribution
function of a certain univariate random sample and the standard uniform
distribution. The approach described in \cite{stephenson2009} in the context of
extreme-value distributions can be applied for estimating the parameter of a
Gumbel copula in a Bayesian setup. A non-parametric estimation procedure is
introduced in \cite{genestneslehovaziegel2011}. For more general information
concerning copula parameter or copula density estimation in parametric and
(especially) non-parametric set-ups, see
\cite{charpentierfermanianscaillet2007}.

In this work, we compare several known and new parametric estimators
for Archimedean copulas both under known and unknown margins (the
margins being non-parametrically estimated and thus replaced by
pseudo-observations). In the large-scale simulation study carried out, we compare the following
estimators based on well-known one-parameter generators (for two-parameter
families, see \cite{hofertmaechlermcneil2012a}):
\begin{enumerate}
\item We consider the method-of-moments estimator based on averaged pairwise sample versions of Kendall's tau. We also consider the average of pairwise Kendall's tau estimators.
\item We apply a multivariate version of the measure of concordance
  known as Blomqvist's beta for estimating Archimedean
  copulas. Blomqvist's beta has the advantage of being given
  explicitly in terms of the copula. Similar to the method-of-moments
  estimation procedure introduced by \cite{genestrivest1993}, the
  copula parameters are estimated such that the population and sample version of Blomqvist's beta coincide.
\item We present several minimum distance estimators for estimating Archimedean copulas. Recently, a transformation of random variables following an Archimedean copula to uniform random variables (similar to Rosenblatt's transformation but simpler to compute) was introduced by \cite{heringhofert2012}. The minimum distance estimators presented here estimate the parameters as the minimum of certain Cram\'er-von-Mises or Kolmogorov-Smirnov distances based on the transformation of \cite{heringhofert2012}.
\item We consider maximum-likelihood estimation. Although the density of an Archimedean copula has an explicit form in theory, deriving and evaluating the required derivatives is known to be challenging from both a theoretical and a numerical perspective, especially in large dimensions. As mentioned below, computations based on computer algebra systems often fail already in low dimensions or require high precision (and are therefore too slow to be applied, for example, in large-scale simulation studies). We present explicit formulas for the densities of well-known Archimedean families and efficiently evaluate them. These results are based on the recent findings of \cite{hofertmaechlermcneil2012a}.
\item We introduce a simulated maximum-likelihood estimator to
  estimate Archimedean copulas. This estimator can be applied if the
  generator derivatives cannot be evaluated accurately but the copula is easy to sample.
\item We present maximum-likelihood estimation based on the diagonal of the Archimedean copula. The main advantage is that the resulting estimation method is comparably easy and fast to apply in virtually any dimension.
\end{enumerate}

The paper is organized as follows. In Section \ref{sec.ac}, we briefly recall the notion of Archimedean copulas. Section \ref{sec.est} introduces and presents the different estimators investigated in this work. Section \ref{sec.sim} contains the large-scale simulation carried out. Section \ref{sec.num} addresses numerical issues when working in large dimensions and provides solutions to some of the problems mentioned. Section \ref{sec.con} concludes.

\section{Archimedean copulas}\label{sec.ac}
\begin{definition}
  An \textit{(Archimedean) generator} is a continuous, decreasing function $\psi:[0,\infty]\to[0,1]$ which satisfies $\psi(0)=1$, $\psi(\infty)=\lim_{t\to\infty}\psi(t)=0$, and which is strictly decreasing on $[0,\inf\{t:\psi(t)=0\}]$. A $d$-dimensional copula $C$ is called \textit{Archimedean} if it permits the representation
  \begin{align}
    C(\bm{u}) = \psi(t(\bm{u})),\quad\text{where}\quad t(\bm{u})=\sum_{j=1}^d\psii(u_j),\quad \bm{u}\in[0,1]^d,\label{C}
  \end{align}
  for some generator $\psi$ with inverse $\psii:[0,1]\to[0,\infty]$, where $\psii(0)=\inf\{t:\psi(t)=0\}$.
\end{definition}

\cite{mcneilneslehova2009} show that a generator defines
an Archimedean copula if and only if $\psi$ is
$d$\textit{-monotone}, that is, $\psi$ is continuous on
$[0,\infty]$, admits derivatives up to the order $d-2$
satisfying $\smash[t]{(-1)^k\frac{d^k}{dt^k}\psi(t)\ge 0}$ for
all $k\in\{0,\dots,d-2\}$, $t\in(0,\infty)$, and
$\smash[t]{(-1)^{d-2}\frac{d^{d-2}}{dt^{d-2}}\psi(t)}$ is
decreasing and convex on $(0,\infty)$. We mainly assume $\psi$
to be \textit{completely monotone}, meaning that $\psi$ is
continuous on $[0,\infty]$ and
$\smash[t]{(-1)^k\frac{d^k}{dt^k}\psi(t)\ge 0}$ for all
$k\in\IN_{0}$, $t\in(0,\infty)$, so that $\psi$ is the
Laplace-Stieltjes transform $\LS[F]$ of a distribution
function $F$ on the positive real line; see Bernstein's
Theorem in \cite[p.\ 439]{feller1971}. The class of all
such generators is denoted by $\Psi_\infty$ and it is clear
that a $\psi\in\Psi_\infty$ generates an Archimedean copula in
any dimensions $d$.

There are several known parametric Archimedean generators
(see, for example, \cite[pp.\ 116]{nelsen2006}) also
referred to as \textit{Archimedean families}. Among the most
widely used in applications are those of Ali-Mikhail-Haq
(A), Clayton (C), Frank (F), Gumbel (G), and
Joe (J). We will consider these generators as working
examples; see Table~\ref{tab.gen} which also includes
population versions of Kendall's tau for these families. Here,
$D_1(\theta)=\int_0^\theta t/(\exp(t)-1)\,dt/\theta$ denotes
the \textit{Debye function of order one}. Detailed information
about the distribution functions $F$ corresponding to the
given generators can be found in \cite{hofert2011b} and
references therein.
\begin{table}[htbp]
  \centering
  \begin{tabularx}{\textwidth}{@{\extracolsep{\fill}}c@{\extracolsep{0mm}}c@{\extracolsep{0mm}}c@{\extracolsep{-7mm}}c}
    \toprule
    \multicolumn{1}{c}{Family}&\multicolumn{1}{c}{Parameter}&\multicolumn{1}{c}{$\psi(t)$}&\multicolumn{1}{c}{$\tau$}\\
    \midrule
    A&$\theta\in[0,1)$&$(1-\theta)/(\exp(t)-\theta)$&$1-2(\theta+(1-\theta)^2\log(1-\theta))/(3\theta^2)$\\
    C&$\theta\in(0,\infty)$&$(1+t)^{-{1/\theta}}$&$\theta/(\theta+2)$\\
    F&$\theta\in(0,\infty)$&$-\log\bigl(1-(1-e^{-\theta})\exp(-t)\bigr)/\theta$&$1+4(D_1(\theta)-1)/\theta$\\
    G&$\theta\in[1,\infty)$&$\exp(-t^{1/\theta})$&$(\theta-1)/\theta$\\
    J&$\theta\in[1,\infty)$&$1-(1-\exp(-t))^{1/\theta}$&$1-4\sum_{k=1}^\infty 1/(k(\theta k+2)(\theta(k-1)+2))$\\
    \bottomrule
  \end{tabularx}
  \caption{Well-known one-parameter Archimedean generators $\psi$ with
    corresponding Kendall's tau. The range of attainable Kendall's tau is
    $(0,1/3)$ for A, $(0,1)$ for C and F, and $[0,1)$ for G and J.}
  \label{tab.gen}
\end{table}

\section{Estimation methods for Archimedean copulas}\label{sec.est}
Assume that we have given realizations $\bm{x}_i$,
$i\in\{1,\dots,n\}$, of independent and identically distributed
(i.i.d.) random vectors $\bm{X}_i$, $i\in\{1,\dots,n\}$, from a joint
distribution function $H$ with known margins $F_j$,
$j\in\{1,\dots,d\}$, Archimedean copula $C$ generated by $\psi$, and
corresponding density $c$. The generator $\psi$ is assumed to belong
to a parametric family $(\psis{\th}{})_{\th\in\Theta}$ with parameter
vector $\th\in\Theta\subseteq\IR^p$, $p\in\IN$, and the true but
unknown vector is $\th_0$ (similarly, $C=C_{\th_0}$ and
$c=c_{\th_0}$). If the margins $F_j$, $j\in\{1,\dots,d\}$, are known,
$u_{ij}=F_j(x_{ij})$, $i\in\{1,\dots,n\}$, $j\in\{1,\dots,d\}$, is a
random sample from $C$. In practice, the margins are typically unknown
and must be estimated parameterically or non-parametrically. In the following, whenever working under unknown margins, we will assume the latter approach and thus consider the \textit{pseudo-observations}
\begin{align}
  \hat{u}_{ij}=\frac{n}{n+1}\hat{F}_{n,j}(x_{ij})=\frac{r_{ij}}{n+1},\label{pobs}
\end{align}
where $\hat{F}_{n,j}$ denotes the \textit{empirical distribution function} corresponding to the $j$th margin and $r_{ij}$ denotes the \textit{rank} of $x_{ij}$ among all $x_{ij}$, $i\in\{1,\dots,n\}$.

For estimating $\th_0$, we now present several methods, some of which are new. We give the formulas in terms of a random sample $\bm{U}_i$, $i\in\{1,\dots,n\}$, from $C$. In Section \ref{sec.sim}, this random sample is replaced either by realizations $\bm{u}_i$, $i\in\{1,\dots,n\}$ (when working under known margins) or by the pseudo-observations $\hat{\bm{u}}_i$, $i\in\{1,\dots,n\}$, when working under unknown margins.

\subsection{Pairwise Kendall's tau}\label{sec.tau}
\textit{Kendall's tau} is defined to be
\begin{align*}
  \tau=\IE[\sign((X_1-X_1^\prime)(X_2-X_2^\prime))],
\end{align*}
where $(X_1,X_2)\T$ is a vector of two continuously distributed random variables, $(X_1^\prime,X_2^\prime)\T$ is an independent copy of $(X_1,X_2)\T$, and $\sign(x)=\I_{(0,\infty)}(x)-\I_{(-\infty,0)}(x)$ denotes the \textit{signum function}. Kendall's tau is a measure of concordance (see \cite{scarsini1984}) and therefore measures the strength of association (as a number in $[-1,1]$) between large values of one variable and large values of the other. Note that Archimedean copulas with generator $\psi\in\Psi_\infty$ are positive lower orthant dependent, thus Kendall's tau always lies in $[0,1]$ for such copulas; see, for example, \cite[pp.\ 59]{hofert2010c}. Kendall's tau has an obvious estimator, referred to as the \textit{sample version of Kendall's tau}. Based on the random sample $\bm{U}_i=(U_{i1},U_{i2})\T$, $i\in\{1,\dots,n\}$, it is given by
\begin{align*}
  \hat{\tau}_n=\binom{n}{2}^{-1}\ \sum_{\mathclap{1\le i_1<i_2\le n}}\sign{((U_{i_11}-U_{i_21})(U_{i_12}-U_{i_22}))}.
\end{align*}%
It can also be estimated directly from the bivariate sample $\bm{X}_i$, $i\in\{1,\dots,n\}$.

If $C$ is a bivariate Archimedean copula generated by a twice continuously differentiable generator $\psi$ with $\psi(t)>0$ for all $t\in[0,\infty)$, Kendall's tau can be represented in semi-closed form as
\begin{align*}
  \tau=1+4\int_0^1\frac{\psii(t)}{(\psii(t))^\prime}\,dt=1-4\int_0^\infty t(\psi^\prime(t))^2\,dt
\end{align*}
(see \cite[p.\ 91]{joe1997}) which can often be computed explicitly; see Table~\ref{tab.gen}.

\cite{genestrivest1993} introduce a method-of-moments estimator for bivariate one-parameter Archimedean copulas based on Kendall's tau. The copula parameter $\theta_0\in\Theta\subseteq\IR$ is estimated by $\hat{\theta}_n$ such that
\begin{align*}
  \tau(\hat{\theta}_n)=\hat{\tau}_n,
\end{align*}
where $\tau(\theta)$ denotes Kendall's tau of the corresponding Archimedean family viewed as a function of the parameter $\theta\in\Theta\subseteq\IR$. In other words,
\begin{align}
  \hat{\theta}_n=\tau^{-1}(\hat{\tau}_n),\label{est.tau}
\end{align}
assuming the inverse $\tau^{-1}$ of $\tau$ exists. This estimation method
obviously only applies to one-parameter families. Otherwise, the set of all
parameters with equal Kendall's tau is a level curve and so Kendall's tau cannot
be uniquely inverted. If (\ref{est.tau}) has no solution, this
estimation method does not lead to an estimator. Note that unless there is an explicit
form for $\tau^{-1}$, $\hat{\theta}_n$ is computed by numerical root finding.

\cite{berg2009} and \cite{savutrede2010} apply this method to data of dimension $d>2$ by using pairwise sample versions of Kendall's tau. If $\hat{\tau}_{n,j_1j_2}$ denotes the sample version of Kendall's tau between the $j_1$th and $j_2$th data column, then $\theta$ is estimated by
\begin{align}
  \hat{\theta}_n=\tau^{-1}\Biggl(\binom{d}{2}^{-1}\ \sum_{\mathclap{1\le j_1<j_2\le d}}\hat{\tau}_{n,j_1j_2}\Biggr).\label{tau.tau}
\end{align}
We denote this estimator or estimation method by
$\tau_{\bar{\hat{\tau}}}$. Intuitively, the parameter is chosen such
that Kendall's tau equals the average over all pairwise sample
versions of Kendall's tau. Note that properties of this estimator are
not known and also not easy to derive since the average is taken over
dependent data columns. In particular, although
$\binom{d}{2}^{-1}\sum_{1\le j_1<j_2\le d}\hat{\tau}_{n,j_1j_2}$ is
unbiased for $\tau(\theta_0)$, the estimator in (\ref{tau.tau}) need not be unbiased for $\theta_0$.

Another ``pairwise'' estimator can be obtained by first computing the $\binom{d}{2}$ pairwise estimators as given in (\ref{est.tau}) and then average over the estimators, that is,
\begin{align*}
  \hat{\theta}_n=\binom{d}{2}^{-1}\ \sum_{\mathclap{1\le j_1<j_2\le d}}\tau^{-1}(\hat{\tau}_{n,j_1j_2}).
\end{align*}
This unbiased estimator can be found in \cite{kojadinovicyan2010a}; see, for example, the function \code{fitCopula(, method=``itau'')} in the \R{} package \pkg{copula}. We denote it or the corresponding estimation method by $\tau_{\bar{\hat{\theta}}}$.

\subsection{Blomqvist's beta}\label{sec.blom}
\textit{Blomqvist's beta} (see, for example, \cite[p.\ 182]{nelsen2006}) is also a measure of concordance. In the bivariate case with $X_j\sim F_j$, $j\in\{1,2\}$, it is defined by
\begin{align*}
  \beta=\IP((X_1-F_1^-(1/2))(X_2-F_2^-(1/2))>0)-\IP((X_1-F_1^-(1/2))(X_2-F_2^-(1/2))<0)
\end{align*}
and therefore measures the probability of falling into the first or third quadrant minus the probability of falling into the second or fourth quadrant, the quadrants being defined by the medians $F_j^-(1/2)$, $j\in\{1,2\}$. This measure can be expressed in terms of the copula of $(X_1,X_2)\T$. It also allows for a natural generalization to $d>2$, given by
\begin{align*}
  \beta=\frac{2^{d-1}}{2^{d-1}-1}(C(1/2,\dots,1/2)+\hat{C}(1/2,\dots,1/2)-2^{1-d});
\end{align*}
see, for example, \cite{schmidschmidt2007}. Here, $\hat{C}$ denotes the survival copula corresponding to $C$. For Archimedean copulas as given in (\ref{C}), Blomqvist's beta is easily seen to be
\begin{align}
  \beta=\frac{2^{d-1}}{2^{d-1}-1}\biggl(\psi(d\psii(1/2))+\biggl(\,\sum_{j=0}^d\binom{d}{j}(-1)^j\psi(j\psii(1/2))\biggr)-2^{1-d}\biggr).\label{beta}
\end{align}
Given the random sample $\bm{U}_i$, $i\in\{1,\dots,n\}$, the \textit{sample version of Blomqvist's beta} is given by
\begin{align}
  \hat{\beta}_n=\frac{2^{d-1}}{2^{d-1}-1}\biggl(\frac{1}{n}\sum_{i=1}^n\biggl(\prod_{j=1}^d\I_{\{U_{ij}\le1/2\}}+\prod_{j=1}^d\I_{\{U_{ij}>1/2\}}\biggr)-2^{1-d}\biggr)\label{beta.hat}
\end{align}
For asymptotic properties of $\hat{\beta}_n$, see \cite{schmidschmidt2007}.

A method-of-moments estimator based on Blomqvist's beta can be obtained via
\begin{align*}
  \hat{\theta}_n=\beta^{-1}(\hat{\beta}_n),
\end{align*}
where $\beta(\theta)$ denotes $\beta$ as a function of the parameter
$\theta\in\Theta\subseteq\IR$. We denote this estimator or estimation method by
$\beta$. As for Kendall's tau, this estimation method only applies to the
one-parameter case. Typically, $\hat{\theta}_n$ is computed via numerical root finding.

\subsection{Minimum distance estimation}\label{sec.mde}
\cite{heringhofert2012} present a transformation for Archimedean copulas that is analogous to Rosenblatt's transform but simpler to compute. Consider a $d$-monotone generator $\psi$ and let $\bm{U}$ follow the Archimedean copula $C$ with generator $\psi$. Furthermore, let the \textit{Kendall distribution function} $K$ (that is, the distribution function of the \textit{probability integral transformation} $C(\bm{U})$) be continuous. Then, the transformed random vector $\bm{U^\prime}=T_\psi(\bm{U})$ with
\begin{align}
  U_{j}^\prime=\left(\frac{\sum_{k=1}^{j}\psii(U_{k})}{\sum_{k=1}^{j+1}\psii(U_{k})}\right)^{j},\ j\in\{1,\dots,d-1\},\ U_{d}^\prime=K(C(\bm{U}))\label{trafo}
\end{align}
follows a uniform distribution on $[0,1]^d$, denoted by $\bm{U^\prime}\sim\U[0,1]^d$. Note that if $\psi\in\Psi_\infty$, then $K(t)=\sum_{k=0}^{d-1}\frac{\psi^{(k)}(\psii(t))}{k!}(-\psii(t))^k$; see \cite{barbegenestghoudiremillard1996} or \cite{mcneilneslehova2009}. The transformation (\ref{trafo}) allows one to easily derive a minimum distance estimator. First, one transforms the random vectors $\bm{U}_i$, $i\in\{1,\dots,n\}$, with $T_\psi$ and then minimizes a ``distance'' between the transformed variates and the multivariate uniform distribution. This could be achieved, for example, with the statistics $S_n^{(B)}$ or $S_n^{(C)}$ used by \cite{genestremillardbeaudoin2009}. For simplicity and run-time performance, however, we map the transformed variates to univariate quantities via
\begin{align*}
  Y^{\text{n}}_i=\sum_{j=1}^d(\Phi^{-1}(U^\prime_{ij}))^2\quad\text{or}\quad Y^{\text{l}}_i=\sum_{j=1}^d-\log U^\prime_{ij},\ i\in\{1,\dots,n\},
\end{align*}
where $\Phi^{-1}$ denotes the quantile function of the standard normal distribution. Such mappings to a univariate setting are known from goodness-of-fit testing; see \cite[p.\ 97]{dagostinostephens1986}. If the transformation $T_\psi(\bm{U})$ is applied with the correct parameter, then $Y^{\text{n}}_i\sim F_{\chi_d^2}$ and $Y^{\text{l}}_i\sim F_{\Gamma_d}$, $i\in\{1,\dots,n\}$, that is, $Y^{\text{n}}_i$ and $Y^{\text{l}}_i$ should follow a chi-square distribution with $d$ degrees of freedom and a $\Gamma(d,1)$ distribution, respectively. Hence, minimum distance estimators can be obtained via the Cram\'er-von Mises and Kolmogorov-Smirnov type of distances
\begin{align*}
  \hat{\th}_n^{\text{n,CvM}}&=\arginf_{\th\in\Theta}n\int_{-\infty}^\infty\bigl\lvert\hat{F}_{n,Y^{\text{n}}}(x)-F_{\chi_d^2}(x)\bigr\rvert^2\,dF_{\chi_d^2}(x)\\
&=\arginf_{\th\in\Theta}\frac{1}{12n}+\sum_{i=1}^n\biggl(\frac{2i-1}{2n}-F_{\chi_d^2}(Y_{(i)}^{\text{n}})\biggr)^2,\\
  \hat{\th}_n^{\text{n,KS}}&=\arginf_{\th\in\Theta}\sup_{x}\bigl\lvert\hat{F}_{n,Y^{\text{n}}}(x)-F_{\chi_d^2}(x)\bigr\rvert\\
&=\arginf_{\th\in\Theta}\max_{i\in\{1,\dots,n\}}\biggl\{F_{\chi_d^2}(Y_{(i)}^{\text{n}})-\frac{i-1}{n},\frac{i}{n}-F_{\chi_d^2}(Y_{(i)}^{\text{n}})\biggr\},\\
  \hat{\th}_n^{\text{l,CvM}}&=\arginf_{\th\in\Theta}n\int_{-\infty}^\infty\bigl\lvert\hat{F}_{n,Y^{\text{l}}}(x)-F_{\Gamma_d}(x)\bigr\rvert^2\,dF_{\Gamma_d}(x)\\
&=\arginf_{\th\in\Theta}\frac{1}{12n}+\sum_{i=1}^n\biggl(\frac{2i-1}{2n}-F_{\Gamma_d}(Y_{(i)}^{\text{l}})\biggr)^2,\\
  \hat{\th}_n^{\text{l,KS}}&=\arginf_{\th\in\Theta}\sup_{x}\bigl\lvert\hat{F}_{n,Y^{\text{l}}}(x)-F_{\Gamma_d}(x)\bigr\rvert\\
&=\arginf_{\th\in\Theta}\max_{i\in\{1,\dots,n\}}\biggl\{F_{\Gamma_d}(Y_{(i)}^{\text{l}})-\frac{i-1}{n},\frac{i}{n}-F_{\Gamma_d}(Y_{(i)}^{\text{l}})\biggr\},
\end{align*}
where $\hat{F}_{n,Y^{\text{n}}}$ and $\hat{F}_{n,Y^{\text{l}}}$ denote the empirical distribution functions based on $(Y^{\text{n}}_i)_{i\in\{1,\dots,n\}}$ and $(Y^{\text{l}}_i)_{i\in\{1,\dots,n\}}$, respectively, and $Y_{(i)}^{\text{n}}$ and $Y_{(i)}^{\text{l}}$, $i\in\{1,\dots,n\}$, denote the order statistics of $(Y^{\text{n}}_i)_{i\in\{1,\dots,n\}}$ and $(Y^{\text{l}}_i)_{i\in\{1,\dots,n\}}$, respectively. We denote these four estimators or estimation methods by $\MDE{\chi}{CvM}$, $\MDE{\chi}{KS}$, $\MDE{\Gamma}{CvM}$, and $\MDE{\Gamma}{KS}$, respectively.

In large dimensions, one can omit the possibly costly computation of $U_{d}^\prime$ and work with $U_{j}^\prime$, $j\in\{1,\dots,d-1\}$, only, see \cite[pp.\ 52]{hering2011}. Note that minimum distance estimators naturally also work for $p\ge2$, that is, parameter vectors $\th\in\Theta\subseteq\IR^p$.

\subsection{Maximum-likelihood estimation}\label{sec.mle}
According to \cite{mcneilneslehova2009}, an Archimedean copula $C$ admits a density $c$ if and only if $\psi^{(d-1)}$ exists and is absolutely continuous on $(0,\infty)$. In this case, $c$ is given by
\begin{align}
  c(\bm{u})=\psi^{(d)}(t(\bm{u}))\prod_{j=1}^d(\psii)^\prime(u_j) =
  \frac{\psi^{(d)}(t(\bm{u}))}{\prod_{j=1}^d\psi^\prime(\psii(u_j))},\ \bm{u}\in(0,1)^d,\label{c}
\end{align}
where, as in (\ref{C}), $t(\bm{u})=\sum_{j=1}^d\psii(u_j)$.
Note that for computing the log-density, it is convenient to write $c$ as
\begin{align*}
  c(\bm{u})=(-1)^d\psi^{(d)}(t(\bm{u}))\prod_{j=1}^d-(\psii)^\prime(u_j)=\frac{(-1)^d\psi^{(d)}(t(\bm{u}))}{\prod_{j=1}^d-\psi^\prime(\psii(u_j))}.
\end{align*}
Given the sample $\bm{U}_i$, $i\in\{1,\dots,n\}$, finding the maximum-likelihood estimator (MLE) usually involves solving the optimization problem
\begin{align*}
  \hat{\th}_n=\argsup_{\th\in\Theta}\sum_{i=1}^n\log c_{\th}(\bm{U}_i),
\end{align*}
where here and in the following the subscript $\th$ is used to stress the
dependence on $\th$. This requires an efficient strategy for evaluating the
(log-)density. The most important part is to know how to derive and compute
the generator derivatives. Tools like automatic differentiation, see \cite{griewankwalther2003}, might provide a solution. Recently, \cite{hofertmaechlermcneil2012a}
presented explicit formulas for all families listed in Table~\ref{tab.gen}.
The corresponding copula densities are reported here
for the reader's convenience (note that $\alpha=1/\theta$):
\begin{enumerate}
\item For the family of Ali-Mikhail-Haq,
  \begin{align*}
    c_\theta(\bm{u})=\frac{(1-\theta)^{d+1}}{\theta^2}\frac{\var{h}{A}{\theta}(\bm{u})}{\prod_{j=1}^du_j^2}
    \Li{-d}(\var{h}{A}{\theta}(\bm{u})),
  \end{align*}
  where $\Li{-s}(z)=\sum_{k=1}^\infty\frac{z^k}{k^s}$ denotes the \textit{polylogarithm of order $s$ at $z$} and $\var{h}{A}{\theta}(\bm{u})=\theta\prod_{j=1}^d\frac{u_j}{1-\theta(1-u_j)}$.
\item For the family of Clayton,
  \begin{align*}
    c_\theta(\bm{u})=\prod_{k=0}^{d-1}(\theta k+1)\biggl(\,\prod_{j=1}^du_j\biggr)^{-(1+\theta)}(1+t_\theta(\bm{u}))^{-(d+\alpha)}.
  \end{align*}
\item For the family of Frank,
  \begin{align*}
    c_\theta(\bm{u})=\biggl(\frac{\theta}{1-e^{-\theta}}\biggr)^{d-1}
    \Li{-(d-1)}(\var{h}{F}{\theta}(\bm{u}))\frac{\exp(-\theta\sum_{j=1}^du_j)}{\var{h}{F}{\theta}(\bm{u})},
  \end{align*}
  where $\var{h}{F}{\theta}(\bm{u})=(1-e^{-\theta})^{1-d}\prod_{j=1}^d(1-\exp(-\theta u_j))$.
\item\label{sec.mle.G} For the family of Gumbel,
  \begin{align*}
    c_\theta(\bm{u})=\theta^d\exp(-t_\theta(\bm{u})^{\alpha})
    \frac{\prod_{j=1}^d(-\log u_j)^{\theta-1}}{t_\theta(\bm{u})^d\prod_{j=1}^du_j}
    \var{P}{G}{d,\alpha}(t_\theta(\bm{u})^{\alpha}),
  \end{align*}
  where
  \begin{align*}
    \var{P}{G}{d,\alpha}(x)&=\sum_{k=1}^d\var{a}{G}{dk}(\alpha)x^k,\\
    \var{a}{G}{dk}(\alpha)&=(-1)^{d-k}\sum_{j=k}^d\alpha^{j}s(d,j)S(j,k)=\frac{d!}{k!}\sum_{j=1}^k\binom{k}{j}\binom{\alpha j}{d}(-1)^{d-j},\ k\in\{1,\dots,d\},
  \end{align*}
  and $s$ and $S$ denote the \textit{Stirling numbers of the first kind} and the
  \textit{second kind}, respectively, given by the recurrence relations
    \begin{align*}
      s(n+1,k)&=s(n,k-1)-ns(n,k),\\
      S(n+1,k)&=S(n,k-1)+kS(n,k),
    \end{align*}
    for all $k\in\IN$, $n\in\IN_0$, with $s(0,0)=S(0,0)=1$ and $s(n,0)=s(0,n)=S(n,0)=S(0,n)=0$ for all $n\in\IN$.
\item\label{sec.mle.J} For the family of Joe,
  \begin{align*}
    c_\theta(\bm{u})=\theta^{d-1}\frac{\prod_{j=1}^d(1-u_j)^{\theta-1}}{(1-\var{h}{J}{\theta}(\bm{u}))^{1-1/\theta}}\var{P}{J}{d,\alpha}\biggl(\frac{\var{h}{J}{\theta}(\bm{u})}{1-\var{h}{J}{\theta}(\bm{u})}\biggr),
  \end{align*}
  where
  \begin{align*}
    \var{P}{J}{d,\alpha}(x)&=\sum_{k=0}^{d-1}\var{a}{J}{dk}(\alpha)x^k,\\
    \var{a}{J}{dk}(\alpha)&=S(d,k+1)(k-\alpha)_{k},\ k\in\{0,\dots,d-1\},
  \end{align*}
  $\var{h}{J}{\theta}(\bm{u})=\prod_{j=1}^d(1-(1-u_j)^\theta)$, and $(k-\alpha)_{k}=\frac{\Gamma(k+1-\alpha)}{\Gamma(1-\alpha)}$ denotes the falling factorial.
\end{enumerate}
\begin{example}
  The left-hand side of Figure \ref{fig.logL} shows the log-likelihood of a Clayton copula based on a 100-dimensional sample of size $n=100$ with parameter $\theta_0=2$ such that the corresponding bivariate population version of Kendall's tau equals $\tau(\theta_0)=0.5$. The MLE is denoted by $\hat{\theta}_n$. The right-hand side of Figure \ref{fig.logL} shows the log-likelihood plot of a 100-dimensional Gumbel family with parameter $\theta_0=2$ such that Kendall's tau equals $\tau(\theta_0)=0.5$. Both Figures are plotted on the interval $[\tau^{-1}(\tau(\theta_0)-h),\tau^{-1}(\tau(\theta_0)+h)]$, where $h=0.05$ denotes a ``distance'' in terms of concordance. Note that evaluating the log-density of a Gumbel copula is numerically highly complicated; see Section \ref{sec.num.GJ} for more details.
  \begin{figure}[htbp]
    \centering
    \includegraphics[width=0.48\textwidth]{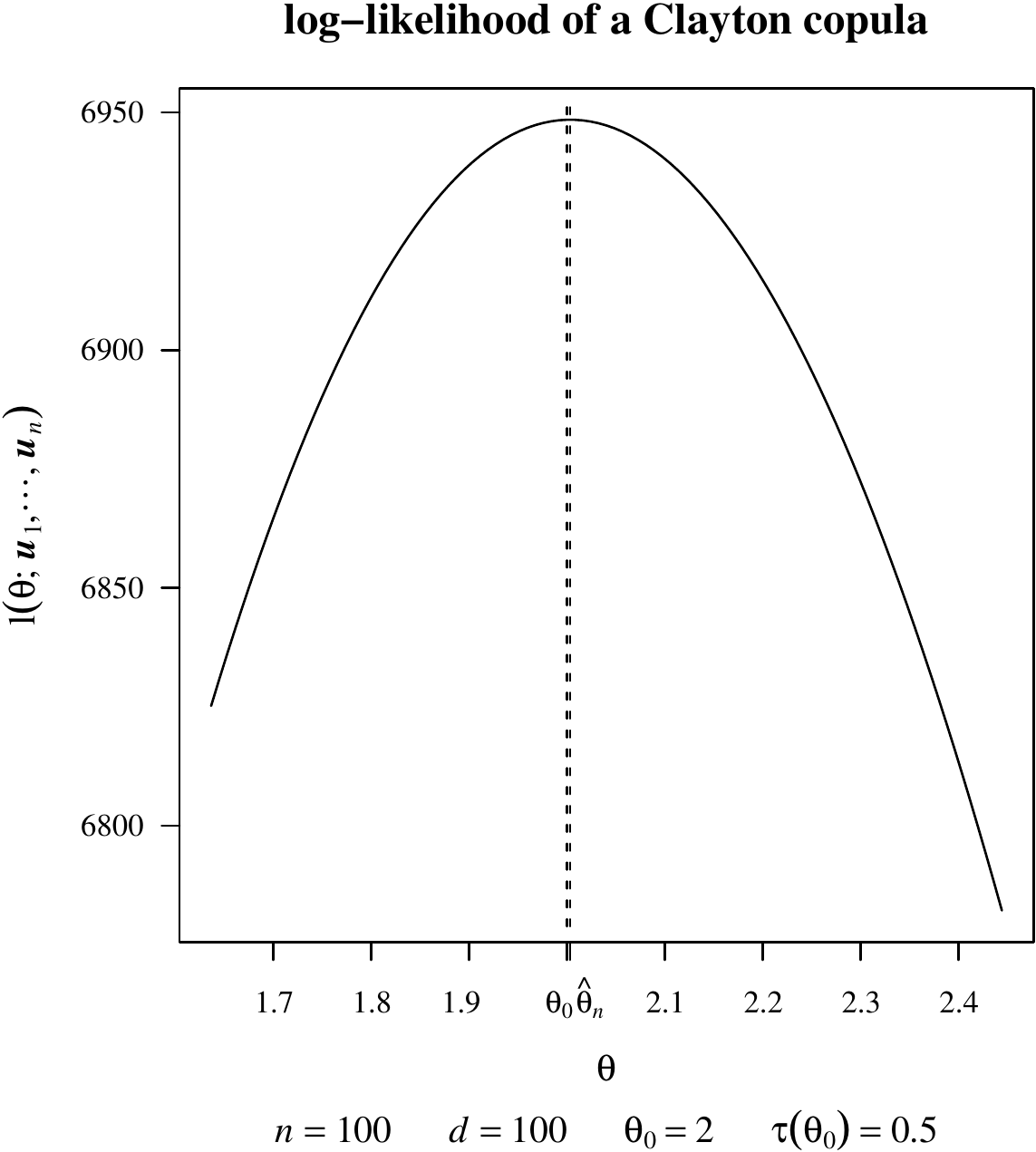}%
    \hfill
    \includegraphics[width=0.48\textwidth]{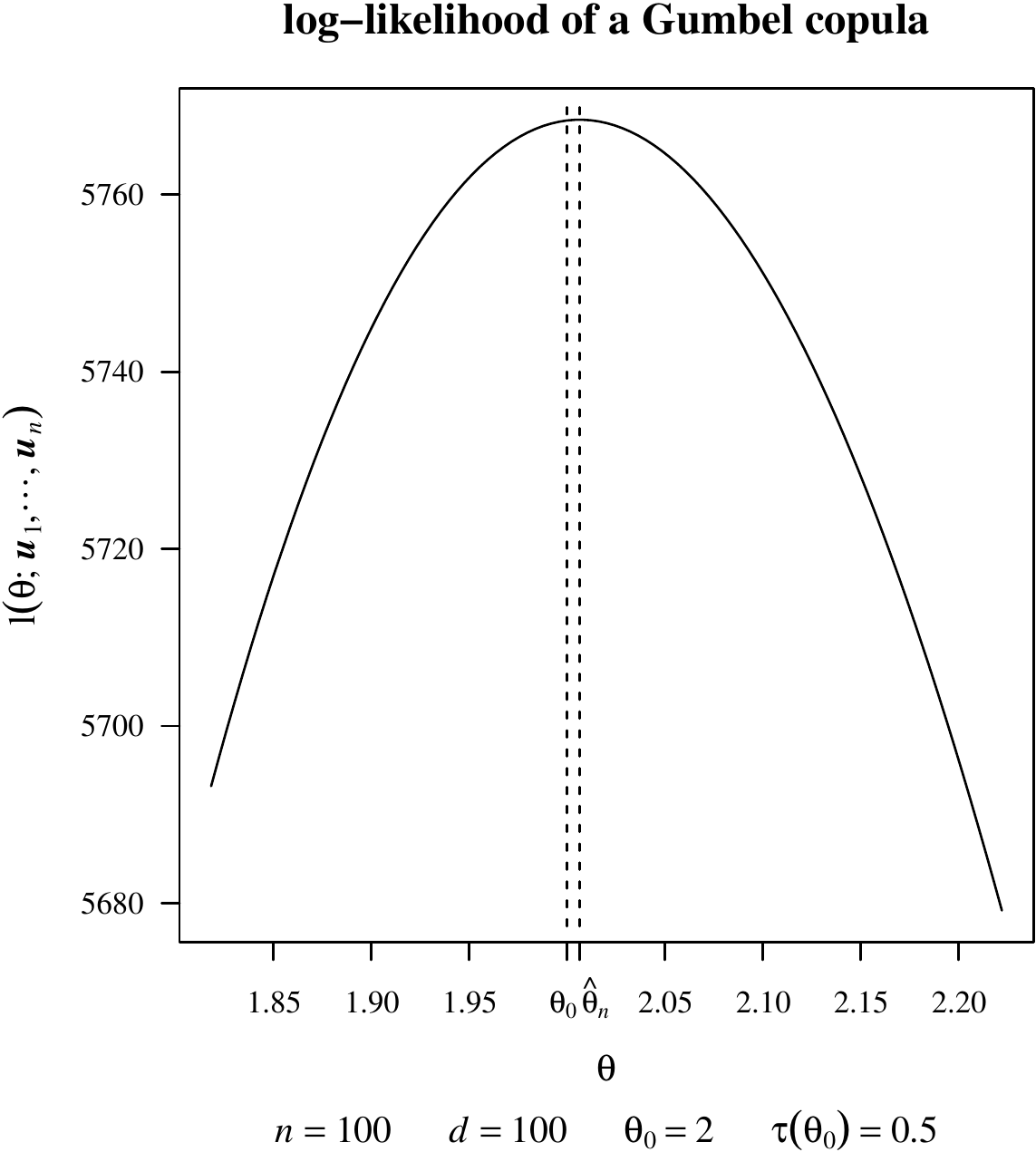}%
    \caption{Plot of the log-likelihood of a Clayton (left) and Gumbel (right) copula based on a sample of size $n=100$ in dimension $d=100$ with parameter $\theta_0=2$ such that Kendall's tau equals $0.5$.}
    \label{fig.logL}
  \end{figure}
\end{example}

\subsection{Simulated maximum-likelihood estimation}\label{sec.smle}
If the derivatives of a given Archimedean generator $\psi$ are not known explicitly one may use the fact that $\psi$ is an expectation in order to approximate the density of the generated copula. This way one can replace derivatives of higher order by just one integral (which can either be evaluated numerically or via Monte Carlo simulation). If $\psi\in\Psi_\infty$, then $\psi(t)=\LS[F](t)=\int_0^\infty\exp(-xt)\,dF(x)$, so that by differentiating under the integral sign one obtains
\begin{align*}
  (-1)^d\psi^{(d)}(t)=\int_0^\infty x^d\exp(-xt)\,dF(x)=\IE[V^d\exp(-Vt)],\ t\in(0,\infty),
\end{align*}
where $V$ has distribution function $F$. An approximation to $(-1)^d\psi^{(d)}(t)$ is thus given by
\begin{align}
  (-1)^d\psi^{(d)}(t)\approx\frac{1}{m}\sum_{k=1}^mV_k^d\exp(-V_kt),\ t\in(0,\infty),\label{der.smle}
\end{align}
where $V_k\sim F$, $k\in\{1,\dots,m\}$, are realizations of i.i.d.\ random variables following $F=\LSi[\psi]$. Instructions for how to sample $F$ for the one-parameter families in Table~\ref{tab.gen} can be found, for example, in \cite{hofert2011b}; see also \cite{hofertmaechler2011}. This method can be used to evaluate the copula density. We refer to the corresponding MLE as \textit{simulated maximum-likelihood estimator} (SMLE). Finally, note that both the MLE and the SMLE naturally apply to the multi-parameter case.

\subsection{Diagonal maximum-likelihood estimation}\label{sec.dmle}
It is well-known that the diagonal $\delta(u)=C(u,\dots,u)$ of a copula $C$ does
not uniquely determine $C$. However, it is also known that a bivariate
associative copula $C$ whose diagonal $\delta$ satisfies $\delta(u)<u$ for all
$u\in(0,1)$ is an Archimedean copula; see \cite[p.\ 113]{nelsen2006}. As far as
we are aware, this \textit{diagonal property} has not been exploited for
estimating Archimedean (or other) copulas. It suggests a simple and
straightforward estimation procedure based on the information from the copula
diagonal, described as follows. Note that the diagonal $\delta_{\th}$ of a parametric copula family $(C_{\th})_{\th\in\Theta}$ is a distribution function and that for $\bm{U}\sim C$,
\begin{align*}
  Y=\max_{1\le j\le d}U_j\sim\delta_{\th}.
\end{align*}
Based on the sample $\bm{U}_i$, $i\in\{1,\dots,n\}$, with corresponding maxima $Y_i$, $i\in\{1,\dots,n\}$, one can apply maximum-likelihood estimation to find an estimator $\hat{\th}_n$ of the parameter vector $\th_0$ via
\begin{align}
  \hat{\th}_n=\argsup_{\th\in\Theta}\sum_{i=1}^n\log\delta_{\th}^\prime(Y_i),\label{min.dmle}
\end{align}
where $\delta^\prime_{\th}$ denotes the density of the distribution function $\delta_{\th}$. We refer to the estimator $\hat{\th}_n$ as \textit{diagonal maximum-likelihood estimator} (DMLE).
For Archimedean copulas, $\delta_{\th}(u)=\psi_{\th}(d\psiis{\th}(u))$ and hence,
\begin{align}\label{diagdensity}
  \delta^\prime_{\th}(u) =
  d\psi_{\th}^\prime(d\psiis{\th}(u))(\psiis{\th})^\prime(u),\quad u\in[0,1].
\end{align}
Therefore, one advantage of the DMLE is that the degree of numerical difficulty of the optimization in (\ref{min.dmle}) (theoretically) remains rather unaffected by the dimension. For the one-parameter Gumbel family, (\ref{min.dmle}) can even be solved explicitly, the estimator $\hat{\theta}_n^{\text{G}}$ of $\theta$ being
\begin{align*}
  \hat{\theta}_n^{\text{G}}=\frac{\log d}{\log n-\log\bigl(\sum_{i=1}^n-\log Y_i\bigr)}.
\end{align*}
An adjusted estimator of the form
\begin{align*}
  \hat{\theta}_n^{\text{G},\ast}=\max\{\hat{\theta}_n^{\text{G}},1\}
\end{align*}
is then guaranteed to provide an admissible parameter estimator for Gumbel's family.

\section{A large-scale simulation study}\label{sec.sim}
In this section, we present a large-scale simulation study in which we compare the performance of the different estimators presented in Section \ref{sec.est} both under known and unknown margins (pseudo-observations). To the best of our knowledge, this is the first study of this kind also addressing large dimensions (up to 100). To be able to also include the estimators based on measures of concordance, we restrict ourselves to the one-parameter families as given in Table~\ref{tab.gen}.

\subsection{A word concerning the implementation}
The results presented in this section are based on the following computational set-up. The procedures are computationally challenging in many ways and much effort has gone into accurate and efficient implementation in \R; see Section \ref{sec.num}. The latest version of the package can be accessed via \url{http://nacopula.r-forge.r-project.org/}. The computations are carried out on the computer cluster Brutus of ETH Zurich which runs CentOS 5.4. The batch jobs are run on nodes with four quad-core AMD Opteron 8380 CPUs and 32\,GB of RAM. Apart from the physical structure of the grid, the compiler, and the programming language, note that run time also depends on other factors such as the quality of the implementation or the current load of the machine. The presented run times should therefore be viewed with this in mind.

\subsection{The experimental design}\label{sec.exp.des}
In the simulation study carried out, we consider both known and unknown
margins. For each of these cases, we generate $N=1000$ samples of size $n$ from
i.i.d.\ random vectors following a $d$-dimensional Archimedean copula with
prespecified parameter $\theta$ such that the corresponding Kendall's tau is
$\tau\in\{0.25,0.75\}$. For the case of unknown margins, we build the
pseudo-observations as given in (\ref{pobs}). Since we are mainly interested in
the behavior for different dimensions $d$, we consider $d\in\{5,20,100\}$ and
fix $n=100$ (so the data matrices considered have up to 10\,000 entries). We investigate the one-parameter families of Ali-Mikhail-Haq (only for $\tau=0.25$ since the range of admissible Kendall's tau for this family is bounded from above by $1/3$), Clayton, Frank, Gumbel, and Joe. The average pairwise Kendall's tau estimator ($\tau_{\bar{\hat{\tau}}}$), the average of Kendall's tau estimators ($\tau_{\bar{\hat{\theta}}}$), the estimation method based on Blomqvist's beta ($\beta$), the four presented minimum distance estimators ($\MDE{\chi}{CvM}$, $\MDE{\chi}{KS}$, $\MDE{\Gamma}{CvM}$, and $\MDE{\Gamma}{KS}$), maximum-likelihood (MLE), simulated maximum-likelihood (SMLE), and diagonal maximum-likelihood (DMLE) estimators are applied to estimate the parameter for each of the $N$ data sets. Finally, bias and root mean squared error (RMSE), as well as mean user run time over all $N$ replications are computed.

For the required optimizations for the estimators based on Blomqvist's beta, all minimum distance estimators, MLE, SMLE, and DMLE, we use initial intervals determined from a large range of (admissible) Kendall's tau; see the implementation of the function \code{initOpt} in the \R{} package \pkg{copula} for more details. For the minimum distance estimators to be competitive according to run time, we only include the Kendall distribution function in the transformation given in (\ref{trafo}) in the five-dimensional case, but not for higher dimensions. For applying the SMLE, we draw $10\,000$ random variates from $V\sim F=\LS[\psi]$ for each evaluation of the density of the Archimedean copula.

\subsection{Results under known margins}\label{sec.res}
Tables~\ref{tab.bias}, \ref{tab.rmse}, and \ref{tab.mut} in the appendix contain the bias (multiplied by 1000), the RMSE (multiplied by 1000), and the mean user times in milliseconds (MUT), respectively, for all investigated estimators under known margins. For each entry, the number in parentheses denotes the entry divided by the corresponding entry of the MLE column, so that the performance with respect to the MLE can easily be determined; the MLE itself thus has always 1.0 in parentheses. For the RMSEs, note that the reciprocals of the square of these numbers are also known as the (estimated) \textit{relative efficiency} of the MLE with respect to the corresponding estimator.

Figures \ref{fig.err.tau.0.25} and \ref{fig.err.tau.0.75} graphically display the square root of the absolute error via box plots obtained from the $N=1000$ replications. Due to readability, we exclude methods which perform so poorly that their boxplots dominate the scale of values. In particular, this is the case for the estimators based on Blomqvist's beta for $d=100$ for the families of Clayton, Frank, and Joe and the SMLE for Clayton and $\tau=0.25$.

The results from this study under known margins can be summarized as follows:
\begin{itemize}
\item The performance of the average of bivariate Kendall's tau estimator $\tau_{\bar{\hat{\theta}}}$ as given in the end of Section \ref{sec.tau} is very similar to the one of the averaged pairwise Kendall tau estimator $\tau_{\bar{\hat{\tau}}}$. One problem that especially $\tau_{\bar{\hat{\theta}}}$ faces is that sample versions of Kendall's tau are sometimes not in the range of tau as a function of theta. These values were then mapped to the range of admissible Kendall's tau, see the function \code{tau.checker} in \pkg{copula}. Furthermore, run time for method $\tau_{\bar{\hat{\theta}}}$ is typically larger (especially in large dimensions) than that for $\tau_{\bar{\hat{\tau}}}$, which is clear since more inversions of Kendall's tau have to be performed. Overall, $\tau_{\bar{\hat{\tau}}}$ is thus preferred. Furthermore, due to only considering pairs at a time, this estimator is quite robust against numerical difficulties. A disadvantage, however, is its large run time due to the quadratic complexity in the dimension $d$.
\item Although Blomqvist's beta can be flawlessly applied to estimate
the copula parameter $\theta$ for small and moderate dimensions $d$, this estimator shows
serious numerical problems for $d$ uniformly over
all investigated Archimedean families. One of the problems turns
out to be that both products appearing in the sample version
(\ref{beta.hat}) of Blomqvist's beta are sometimes zero, so that
$\hat{\beta}_n<0$ although $\beta\ge0$. Another problem is that
the evaluation of the survival copula at $(1/2,\dots,1/2)\T$ turns out to be numerically challenging for several families; see Section \ref{sec.num.dfrank} for more details.
\item The performance of the minimum distance estimators depends on the mapping to the one-dimensional setting applied. In particular, the estimators $\MDE{\Gamma}{CvM}$ and $\MDE{\Gamma}{KS}$ based on the logarithmic transformation to a Gamma distribution do not perform well in comparison to $\MDE{\chi}{CvM}$ and $\MDE{\chi}{KS}$. Furthermore, the minimum distance estimators based on the Kolmogorov-Smirnov distances are outperformed by those based on the Cram\'er-von Mises distances (also according to run time in most of the cases investigated, see Table \ref{tab.mut}). Note that run time for $d=5$ is larger than for $d\in\{20,100\}$ because we applied the full transformation $T_\psi$ including the Kendall distribution function $K$ in the five-dimensional case. Moreover, note that the distances (objective functions) had to be reparameterized in order for the minimum distance estimators to be computed; see Section \ref{sec.num.mde} for more details.
\item With the explicit formulas for the densities we presented, the MLE
  clearly shows the best performance under known margins. Note that the run
  times are much smaller than one would expect in comparison to other
  estimators, although, our implementation was written with focus on
  readability rather than run-time performance and thus several quantities
  are computed each time the density is evaluated. In contrast to
  statements found in the literature (see, for example, \cite{bergaas2009} or \cite{weiss2010}) this leaves no doubt that maximum-likelihood estimation is feasible in
  large dimensions and performs well; for the latter, see also
  \cite{hofertmaechlermcneil2012a} who empirically show that the mean
  squared error (MSE) satisfies
  \begin{align*}
    \text{MSE}\propto\frac{1}{nd}.
  \end{align*}
\item The SMLE also shows an incredible performance, the only exception being Clayton's family; see Section \ref{sec.num.smle} for more details. The RMSEs are close to the ones obtained by maximum-likelihood estimation. Moreover, this method is straightforward to implement given random number generators for the distribution corresponding to the generator under consideration. The drawback of this method is certainly a larger run time. Note that this could be partly reduced, for example, by using an adaptive technique in which smaller amounts of random variates are drawn during the first couple of steps the optimizer performs. Also note that with the constant use of 5000 random variates (instead of 10\,000) per density evaluation, the overall performance of the SMLE is still slightly better than those of the minimum distance estimator $\MDE{\chi}{CvM}$.
\item The advantage of the DMLE lies in its speed. Due to this fact, this estimator could be used for finding initial values for more sophisticated estimation methods.
\end{itemize}
\begin{figure}[htbp]
  \centering
  \includegraphics[height=0.97\textheight]{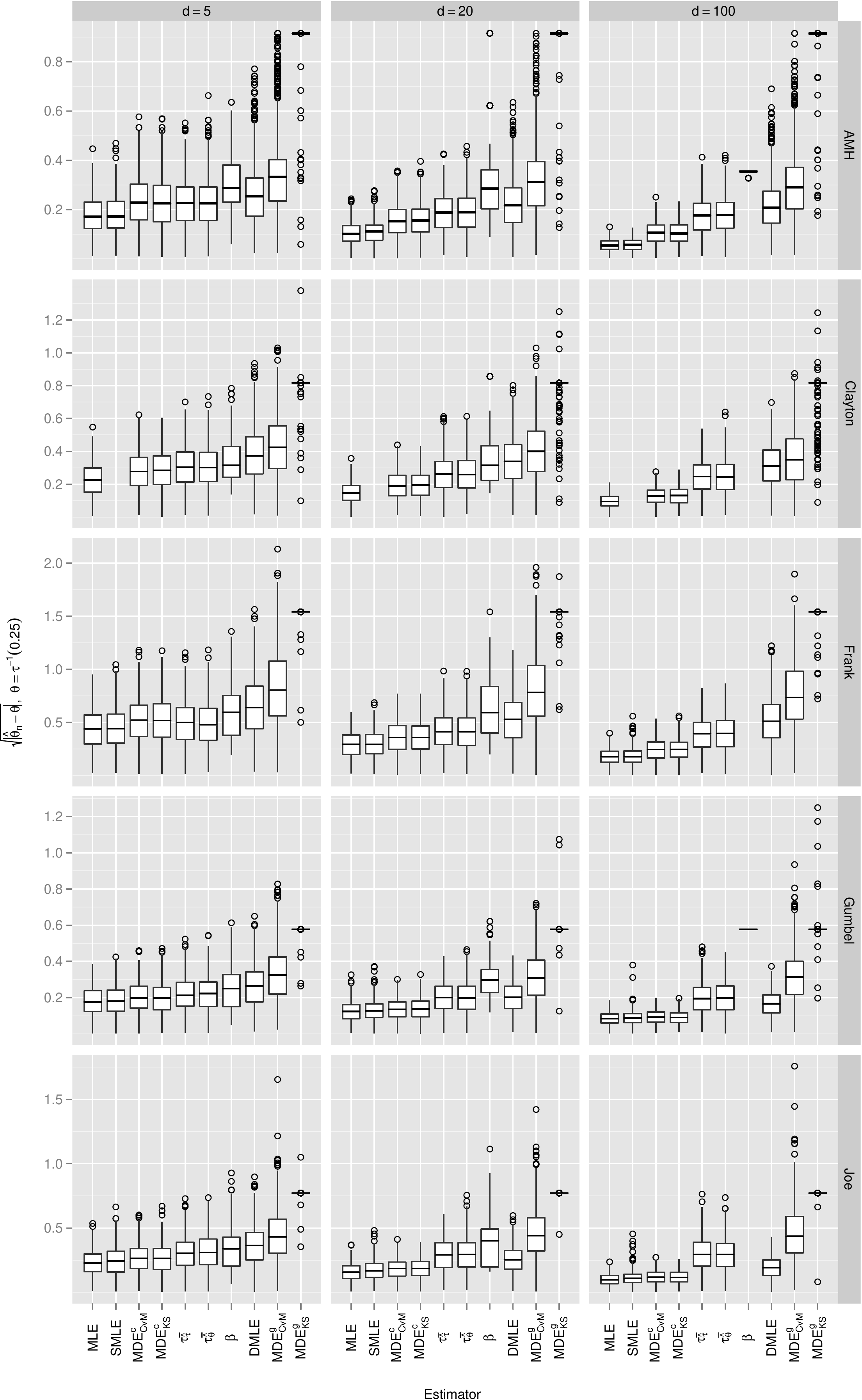}%
  \caption{Box plots of the square root of the absolute error (under known margins) obtained from $N=1000$ replications of sample size $n=100$ for Kendall's tau equal to 0.25.}
  \label{fig.err.tau.0.25}
\end{figure}
\begin{figure}[htbp]
  \centering
  \includegraphics[width=0.9\textwidth]{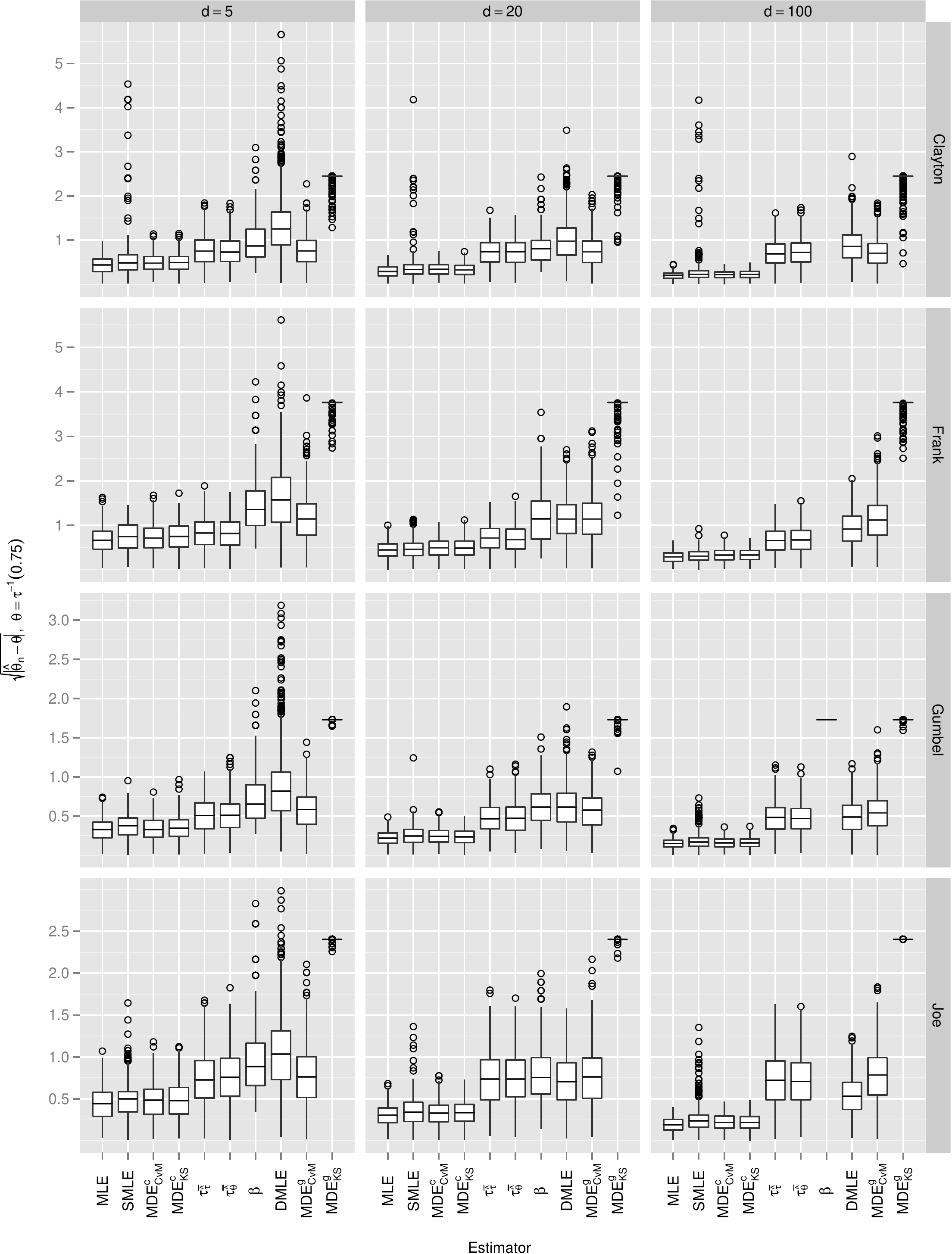}%
  \caption{Box plots of the square root of the absolute error (under known margins) obtained from $N=1000$ replications of sample size $n=100$ for Kendall's tau equal to 0.75.}
  \label{fig.err.tau.0.75}
\end{figure}

\subsection{Results under unknown margins}\label{sec.res.pobs}
Tables~\ref{tab.bias.pobs} and \ref{tab.rmse.pobs} in the appendix contain the bias (multiplied by 1000) and the RMSE (multiplied by 1000), respectively, for all investigated estimators based on pseudo-observations. Note that the run times are quite similar to those reported in Table \ref{tab.mut} and therefore omitted.

Figures \ref{fig.err.tau.0.25.pobs} and \ref{fig.err.tau.0.75.pobs} graphically display the corresponding square root of the absolute error via box plots obtained from the $N=1000$ replications. As for Figures \ref{fig.err.tau.0.25} and \ref{fig.err.tau.0.75}, we exclude methods which perform so poorly that their boxplots dominate the scale of values. Under pseudo-observations, these were the same methods as under known margins with the only exceptions being $\MDE{\Gamma}{KS}$ for $d\in\{20,100\}$ which are excluded and SMLE for Clayton which performed better under pseudo-observations and is thus included in the figures; see Section \ref{sec.num.smle} for an explanation.

The performance of the estimators based on pseudo-observations can be summarized as follows. Overall, the MLE still performs best, but the differences in absolute error are much less obvious. Furthermore, although a slight improvement of the performance of the MLE in larger dimensions $d$ is visible, the rate of improvement does not seem to be as large as under known margins. Concerning the estimators based on Kendall's tau and the minimum distance estimators, the former performs well for the case $\tau=0.25$, the latter performs well for the case $\tau=0.75$.
\begin{figure}[htbp]
  \centering
  \includegraphics[height=0.97\textheight]{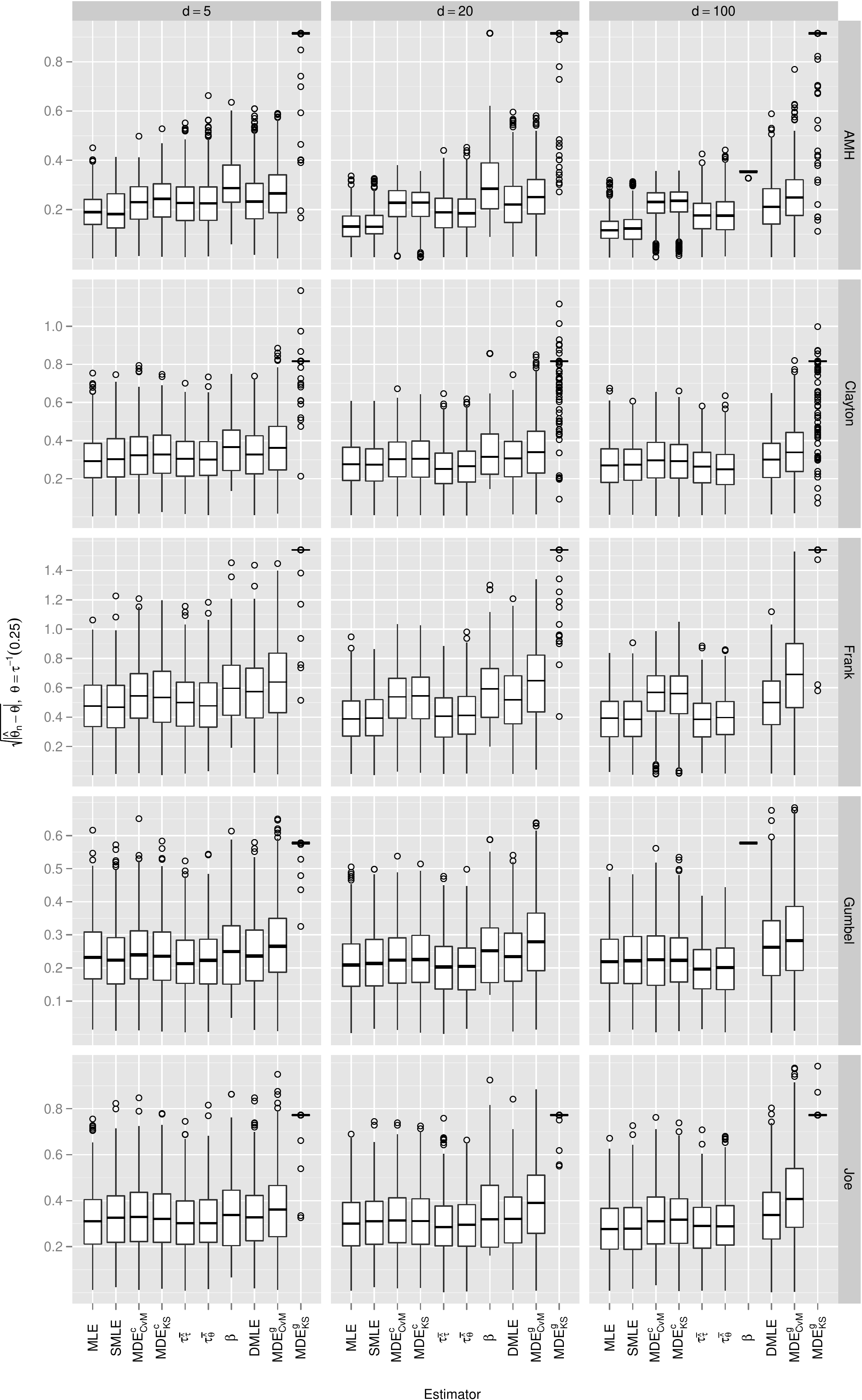}%
  \caption{Box plots of the square root of the absolute error (under unknown margins) obtained from $N=1000$ replications of sample size $n=100$ for Kendall's tau equal to 0.25.}
  \label{fig.err.tau.0.25.pobs}
\end{figure}
\begin{figure}[htbp]
  \centering
  \includegraphics[width=0.9\textwidth]{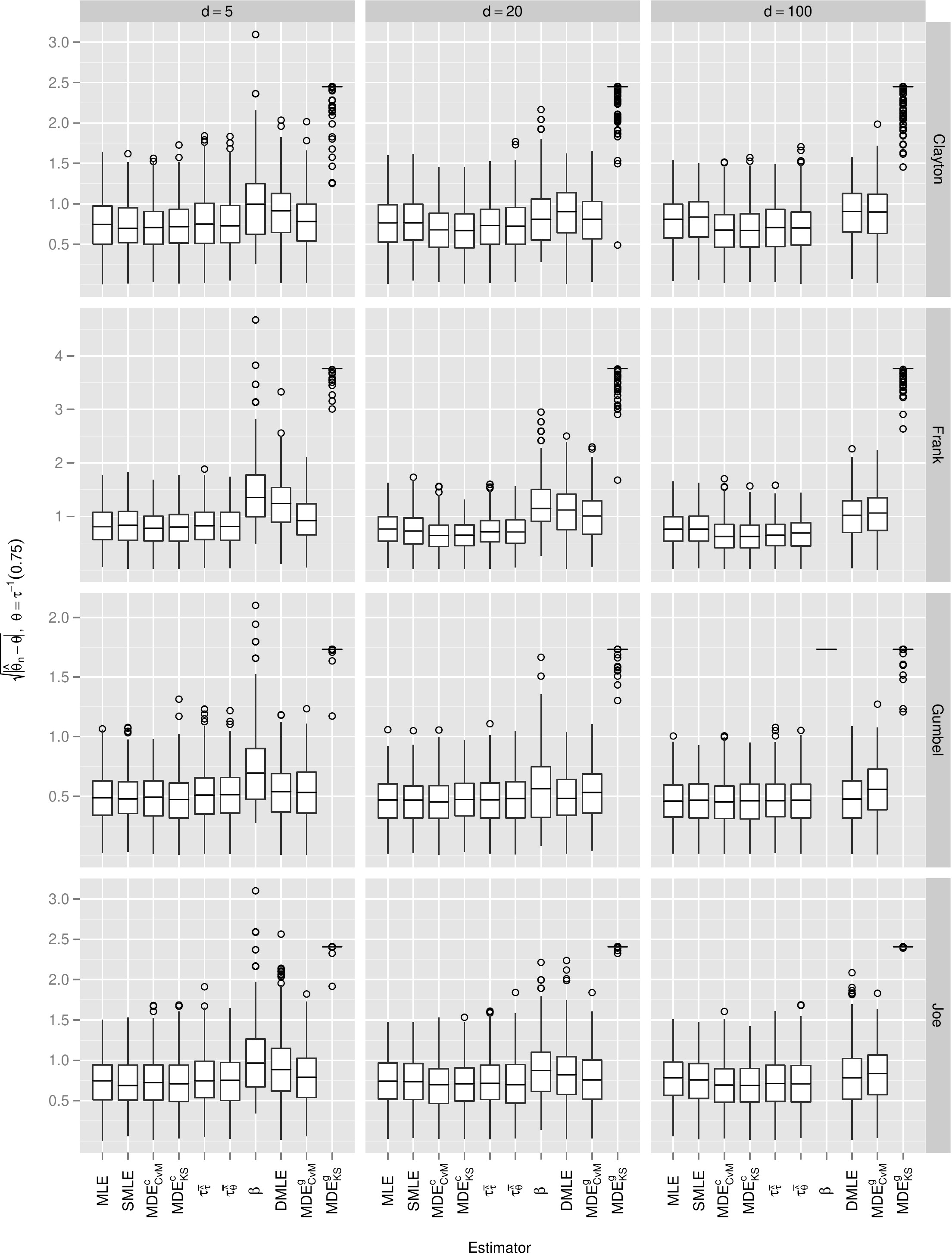}%
  \caption{Box plots of the square root of the absolute error (under unknown margins) obtained from $N=1000$ replications of sample size $n=100$ for Kendall's tau equal to 0.75.}
  \label{fig.err.tau.0.75.pobs}
\end{figure}

\section{Numerical issues and partial solutions}\label{sec.num}
In this section, we address some specific numerical problems we
encountered when working in high dimensions. These problems are not
trivial to solve and for some, no simple solution exists to date. We
included this section to emphasize that working in large dimensions is
much more affected by numerical issues. This is not merely a problem
of slow run times; it is also a huge problem for precision. As a general remark, let us stress that what is known about estimators in low dimensions does not always carry over to the high-dimensional case: Estimators that are fast in low dimensions may turn out to be too slow in large dimensions (although robust, the pairwise Kendall's tau estimators face this problem); estimators whose simple form suggest good performance in large dimensions may be highly prone to numerical errors (which is the case, for example, for Blomqvist's beta due to accessing the survival copula involved, a critical task in large dimensions).

\subsection{Minimum distance estimators}\label{sec.num.mde}
The minimum distance estimators were especially prone to the problem of a
flat objective function for the optimization for all but the
Ali-Mikhail-Haq family. Note that there, the
parameter $\theta$ runs in a bounded interval which is typically
advantageous for optimization.

To show the problem, we consider the Gumbel copula and pick out the Cram\'er-von Mises distances based on the mapping to a $\chi^2$ distribution (via the quantile function of the normal distribution) as described in Section \ref{sec.mde}. The left-hand side of Figure \ref{fig.mde} shows the objective function (the distance to be minimized) based on samples of size $n=100$ from Gumbel copulas in the dimensions $d\in\{5,20,100\}$ with parameter $\theta=4/3$ such that Kendall's tau equals 0.25. We choose the same (large) plotting interval as is chosen for the optimization in the simulation study in Section \ref{sec.sim}. As can be seen from this figure, the distance to be minimized becomes flat already for moderate parameter values. The optimization of this distance carried out in the simulation study is done via \R's \code{optimize}. It is indicated on the corresponding help page how this function proceeds. Based on the first two points in the optimization procedure, it is clear that the algorithm remains in the ``flat part'' of the distance function and thus returns wrong estimates.

The solution to this problem is simple and effective: By reparameterizing
the distance one can carry out the optimization without problems. To see
why, consider the right-hand side of Figure \ref{fig.mde} which shows
precisely the same distance as on the left-hand side of this figure, but
now plotted in $\alpha=1-1/\theta$. The advantage of this
reparameterization is that the objective function is now a function of the
bounded variable $\alpha\in(0,1]$.
\begin{figure}[htbp]
  \centering
  \includegraphics[width=0.48\textwidth]{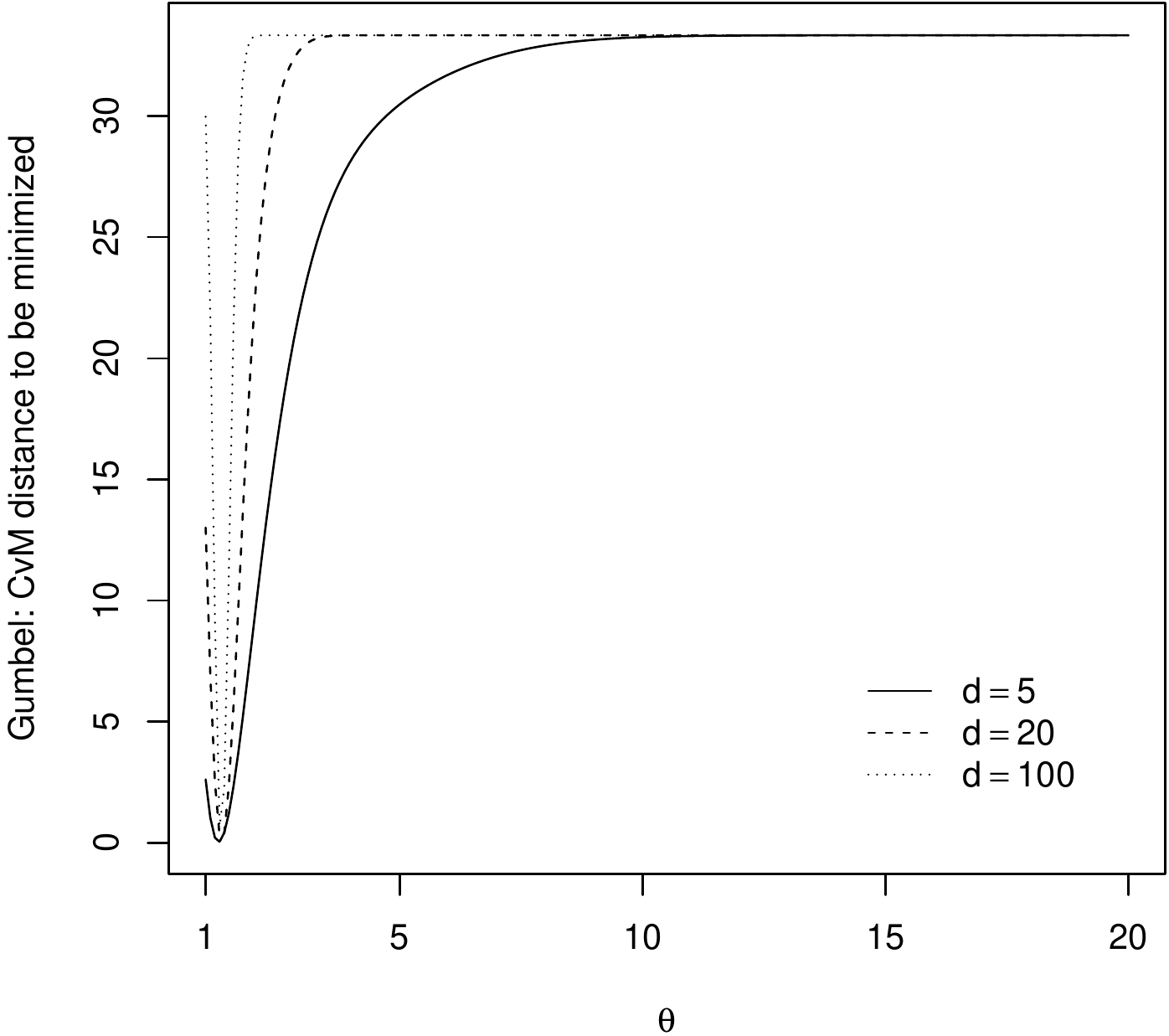}%
  \hfill
  \includegraphics[width=0.48\textwidth]{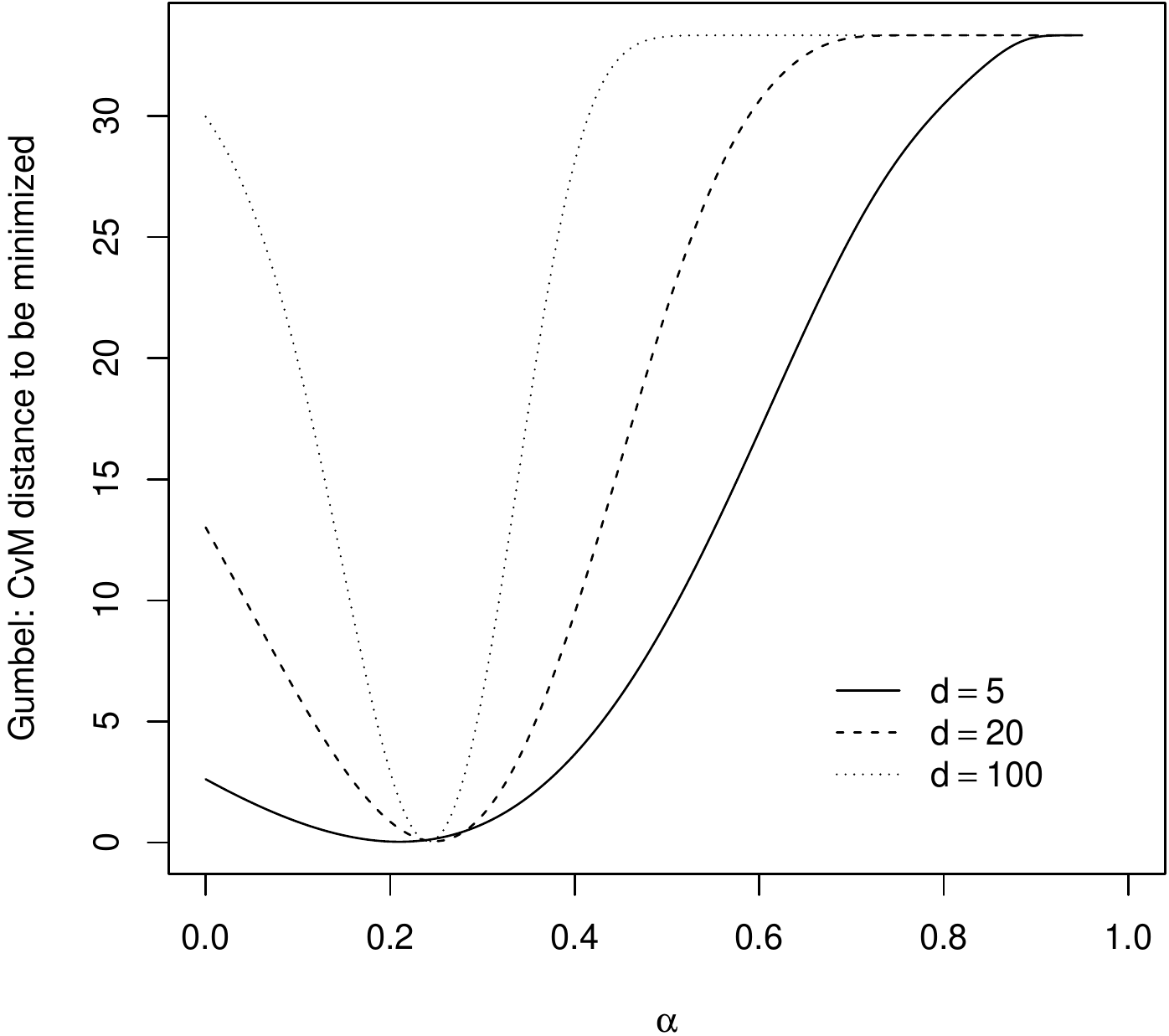}%
  \caption{Plot of the Cram\'er-von Mises distances (based on the mapping to a $\chi^2$ distribution) without (left) and with (right) reparameterization of the distance for the Gumbel copula with parameter $\theta=4/3$ (Kendall's tau equals 0.25) based on a sample of size $n=100$ in the dimensions indicated.}
  \label{fig.mde}
\end{figure}

Similar transformations turn out to be convenient for the families of Clayton, Frank, and Joe as well. For the latter, we use the same reparameterization as for Gumbel, for Clayton and Frank we use $\alpha=2\arctan(\theta)/\pi$.

\subsection{Simulated maximum-likelihood estimation}\label{sec.num.smle}
As can be seen from the results in Sections \ref{sec.res} and
\ref{sec.res.pobs}, the SMLE performs well except for Clayton's family
under known margins.  In this section we briefly investigate why. For this,
recall that the SMLE is based on the approximation (\ref{der.smle}). The
log-density approximated via this Monte Carlo method then involves
\begin{gather}
  \log((-1)^d\psi^{(d)}(t))\approx\log\biggl(\frac{1}{m}\sum_{k=1}^mV_k^d\exp(-V_kt)\biggr)=\log\biggl(\frac{1}{m}\sum_{k=1}^m\exp(b_k)\biggr),\label{log.der.smle}\\
  \text{where}\quad b_k=d\log(V_k)-V_kt.\notag
\end{gather}
For the SMLE to compute, we have to replace $t$ in (\ref{log.der.smle}) by
$\sum_{j=1}^d\psii(u_j)$. For simplicity, let us assume that all components
$u_j$ are equal to $u$, so we consider the vector
$\bm{u}=(u,\dots,u)\T\in[0,1]^d$. The corresponding value $t$ for
(\ref{log.der.smle}) is then $t=d\psii(u)=d(u^{-\theta}-1)$. Let us assume that $\theta=2$, that is, the corresponding value of Kendall's tau is 0.5. If $u$ is small, then $t$ becomes large. The problem is now that the exponents $b_k$ become quite small. Indeed, they are so small (depending on $t$) that the $\exp(b_k)$ become zero in computer arithmetic for many (again depending on $t$) of the $m=10\,000$ sampled $V_k$'s. These zeros significantly affect the approximation in (\ref{log.der.smle}).

To give an example, let $\theta=2$, $d=5$, draw i.i.d.\ $V_k\sim\Gamma(1/\theta,1)$, $k\in\{1,\dots,m\}$, for $m=10\,000$ (using \code{set.seed(1)}), and compute the approximation in (\ref{log.der.smle}) at $t\in\{5\cdot 10^{16}, 5\cdot 10^{12}, 5\cdot 10^8, 15\}$. The left-hand side (correct values) for these values of $t$ are (roughly) -208.09, -157.44, -106.78, and -11.86, whereas the right-hand side gives -\code{Inf}, -622.62, -124.41, and -11.86. As one can see, for $t=15$ both values agree, for $t=5\cdot 10^8$, the approximation is already quite far away from the corresponding true value. For large $t$ this problem becomes more severe, with the extreme case being such a large $t$ that $\exp(b_k)$ is zero in computer arithmetic for all $V_k$'s. This implies that $\log(0)=-\code{Inf}$ in (\ref{log.der.smle}). Note that this could be avoided by using an intelligent logarithm as given in Lemma \ref{lem.lsum.lssum} \ref{lem.lsum} below. However, this does not solve the problem of a poor approximation to $\log((-1)^d\psi^{(d)}(t))$ (see also Figure \ref{fig.smle} below), the problem being that all summands are zero, except the one being $\exp(b_k-b_{\text{max}})=\exp(0)=1$.

Figure \ref{fig.smle} shows, in log-log scale, the relative error
of the approximation (\ref{log.der.smle}) for $\bm{u}=(u,\dots,u)\T$ as a
function of $u$, based on $m=10\,000$, $d=5$, and $\theta=2$.  As is
clearly visible, the relative error of the approximation becomes much
larger for smaller values of $u$. Since the Clayton copula has lower tail
dependence, there is indeed a positive probability of obtaining random
vectors with simultaneously small components. These (and only these)
samples affect the likelihood approximation and lead to wrong SMLEs.
Note that this problem vanishes for the SMLE based on pseudo-observations, see Figures \ref{fig.err.tau.0.25.pobs} and \ref{fig.err.tau.0.75.pobs}, since each $\hat{u}_{ij}\in\{1/(n+1),\dots,n/(n+1)\}$ and thus the $\hat{u}_{ij}$'s are naturally bounded from below by $1/(n+1)$.
\begin{figure}[htbp]
  \centering
  \includegraphics[width=0.5\textwidth]{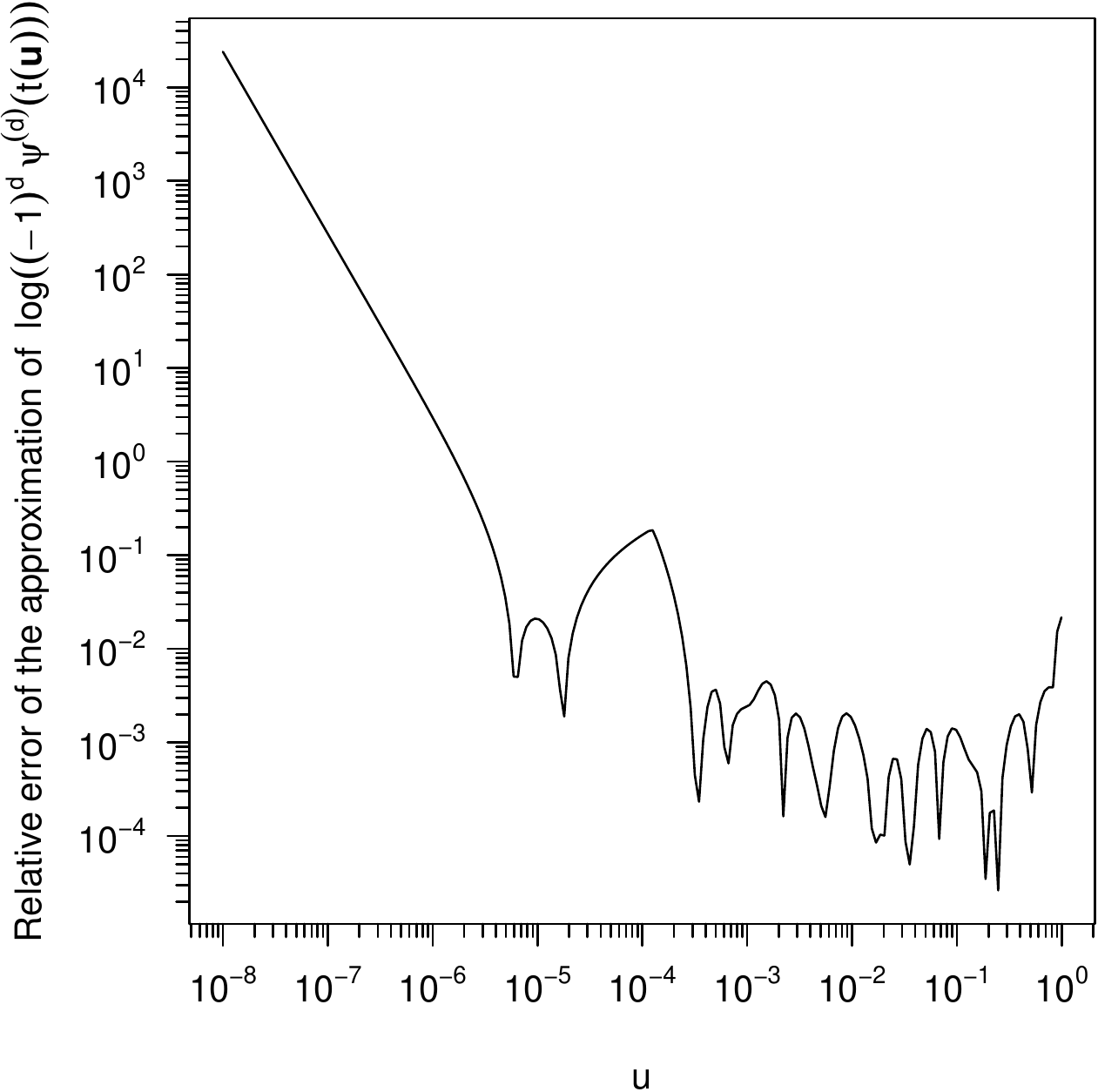}%
  \caption{Relative error as a function of $u$ based on $m=10\,000$, $d=5$, and $\theta=2$.}
  \label{fig.smle}
\end{figure}

\subsection{Gumbel's and Joe's polynomial}\label{sec.num.GJ}
For computing the log-likelihood for the Archimedean Gumbel or Joe copula, we need an efficient way of evaluating the logarithm of the density $c_\theta(\bm{u})$ as given in Section \ref{sec.mle}, Parts \ref{sec.mle.G} and \ref{sec.mle.J}, respectively. The challenge is to evaluate the logarithm of the polynomials involved. For this the following auxiliary results are essential. Their proofs are straightforward and thus omitted.
\begin{lemma}\label{lem.lsum.lssum}
  \begin{enumerate}
  \item\label{lem.lsum} Let $x_i\ge0$, $i\in\{1,\dots,n\}$, such that $\sum_{i=1}^nx_i>0$. Furthermore, let $b_i=\log x_i$, $i\in\{1,\dots,n\}$, with $\log 0=-\infty$, and let $b_{\text{max}}=\max_{1\le i\le n}b_i$. Then
    \begin{align}
      \log\sum_{i=1}^nx_i=b_{\text{max}}+\log\sum_{i=1}^n\exp(b_i-b_{\text{max}}).\label{lsum}
    \end{align}
  \item\label{lem.lssum} Let $x_i\in\IR$, $i\in\{1,\dots,n\}$, such that $\sum_{i=1}^nx_i>0$. Furthermore, let $s_i=\sign x_i$, $b_i=\log\lvert x_i\rvert$, $i\in\{1,\dots,n\}$, with $\log 0=-\infty$ and let $b_{\text{max}}=\max_{1\le i\le n}b_i$. Then
    \begin{align}
      \log\sum_{i=1}^nx_i=b_{\text{max}}+\log\biggl(\sum_{i=1\atop i:\,s_i=1}^n\exp(b_{(i)}-b_{\text{max}})-\sum_{i=1\atop i:\,s_i=-1}^n\exp(b_{(i)}-b_{\text{max}})\biggr),\label{lssum}
    \end{align}
    where $b_{(i)}$ denotes the $i$th smallest value of $b_i$, $i\in\{1,\dots,n\}$.
  \end{enumerate}
\end{lemma}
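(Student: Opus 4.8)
The plan is to prove both identities by the same elementary manipulation: factoring out the dominant exponential term. For part \ref{lem.lsum}, first note that the hypothesis $\sum_{i=1}^n x_i > 0$ guarantees at least one $x_i > 0$, hence $b_{\text{max}} > -\infty$ and $\exp(b_{\text{max}})$ is a genuine positive real number. Then I would simply write $x_i = \exp(b_i)$ (with the convention $\exp(-\infty)=0$ absorbing the $x_i=0$ terms consistently), factor:
\begin{align*}
  \sum_{i=1}^n x_i = \sum_{i=1}^n \exp(b_i) = \exp(b_{\text{max}})\sum_{i=1}^n \exp(b_i-b_{\text{max}}),
\end{align*}
and take logarithms of both sides, using that the bracketed sum is strictly positive (it contains the term $\exp(0)=1$ coming from an index achieving the maximum). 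This yields \eqref{lsum} directly.

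For part \ref{lem.lssum}, the idea is the same but one must track signs since the $x_i$ may be negative. Write $x_i = s_i \lvert x_i\rvert = s_i \exp(b_i)$ with $s_i = \sign x_i \in\{-1,0,1\}$ and $b_i = \log\lvert x_i\rvert$. Again $\sum_{i=1}^n x_i > 0$ forces $b_{\text{max}} > -\infty$. Split the sum according to the sign:
\begin{align*}
  \sum_{i=1}^n x_i = \sum_{i:\, s_i=1} \exp(b_i) - \sum_{i:\, s_i=-1} \exp(b_i) = \exp(b_{\text{max}})\biggl(\sum_{i:\, s_i=1}\exp(b_i-b_{\text{max}}) - \sum_{i:\, s_i=-1}\exp(b_i-b_{\text{max}})\biggr),
\end{align*}
and take the logarithm, which is legitimate because the left-hand side is positive by assumption and therefore so is the bracketed factor. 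Reindexing the $b_i$ in increasing order as $b_{(i)}$ (a purely cosmetic relabeling which is numerically beneficial — adding the smallest terms first — but mathematically changes nothing, since finite sums are commutative) gives exactly \eqref{lssum}.

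The only subtlety worth a sentence is the role of the positivity hypothesis: it is needed both to make $\exp(b_{\text{max}})$ well-defined (so that dividing through by it is legal) and to ensure the argument of the outer logarithm on the right-hand side is strictly positive, so that \eqref{lsum} and \eqref{lssum} are equalities of finite real numbers rather than vacuous $-\infty = -\infty$ statements. Since both steps amount to one line of algebra plus one application of $\log(ab)=\log a + \log b$, there is no real obstacle here — the content of the lemma is entirely in its numerical utility (avoiding overflow/underflow when the $b_i$ are large in magnitude), not in the mathematics, which is why the proof can safely be omitted in the paper.
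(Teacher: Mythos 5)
Your proof is correct and is precisely the ``straightforward'' argument the paper has in mind when it omits the proof: factor out $\exp(b_{\text{max}})$ (well defined and positive since $\sum_i x_i>0$ forces $b_{\text{max}}>-\infty$), use $\log(ab)=\log a+\log b$, and observe that the ordering by $b_{(i)}$ in \eqref{lssum} is a numerically motivated but mathematically immaterial relabeling. You also correctly identify the only point needing care, namely that the positivity hypothesis makes both sides finite real numbers.
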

The ideas behind Lemma \ref{lem.lsum.lssum} \ref{lem.lsum} and \ref{lem.lssum} are implemented in the (non-exported) functions \code{lsum} and \code{lssum} in the \R{} package \pkg{copula}.

Although mathematically straightforward, Lemma \ref{lem.lsum.lssum} has an important consequence for evaluating logarithms of polynomials such as $\var{P}{G}{d,\alpha}$ for Gumbel's or $\var{P}{J}{d,\alpha}$ for Joe's density. Depending on the evaluation point, it might happen that the value of the polynomial is not representable in computer arithmetic and thus one cannot first compute the value of the polynomial and take the logarithm afterwards. Instead, Formula (\ref{lsum}) suggests a ``intelligent'' (numerically stable) logarithm to compute such polynomials (or sums). By taking out the maximum of the $b_i$, it is guaranteed that the exponentials which are summed up are all in $[0,1]$ and thus the sum takes on values in $[0,n]$, representable in computer arithmetic. This trick solves the numerical issues for computing $\var{P}{J}{d,\alpha}$ and thus for computing the log-likelihood of a Joe copula. The evaluation of $\var{P}{J}{d,\alpha}$ is implemented as (non-exported) function \code{polyJ} in the \R{} package \pkg{copula}. It is called with default method \code{log.poly} implementing the trick described above when evaluating the density of a Joe copula via the slot \code{dacopula}; two other, less efficient methods are also available, one of which is a straightforward polynomial evaluation (\code{poly}).

Formula (\ref{lssum}) takes the above idea of an intelligent logarithm a step further, by dealing with possibly negative summands. The summands in each sum are ordered in increasing order to prevent cancellation. This formula is helpful in computing $\var{P}{G}{d,\alpha}$. However, the situation turns out to be more challenging for Gumbel's family. All in all, several different methods for the evaluation of $\var{P}{G}{d,\alpha}$ were implemented. They are based on the following results about $\var{P}{G}{d,\alpha}$ and described below, where here and in the following, $\alpha=1/\theta \in(0,1]$.
\begin{lemma}\label{lem.polyG}
  Let
  \begin{align}
    \var{P}{G}{d,\alpha}(x)=\sum_{k=1}^d\var{a}{G}{dk}(\alpha)x^k\label{polyG}
  \end{align}
  for $\alpha\in(0,1]$, where
  \begin{align}
  \var{a}{G}{dk}(\alpha)&=(-1)^{d-k}\sum_{j=k}^d\alpha^{j}s(d,j)S(j,k)\label{polyG.coeff.stir}\\
&=\frac{d!}{k!}\sum_{j=1}^k\binom{k}{j}\binom{\alpha j}{d}(-1)^{d-j}\label{polyG.coeff.binom},\ k\in\{1,\dots,d\}.
  \end{align}
  Then \begin{enumerate}
  \item\label{dsumSibuya} $\var{a}{G}{dk}(\alpha)=\var{p}{J}{01}(d;k)d!/k!$ for all $\alpha\in(0,1]$, where $\var{p}{J}{01}(d;k)>0$ denotes a probability mass function in $d\in\{k,k+1,\dots\}$;
  \item\label{polyG.repr} $\var{P}{G}{d,\alpha}$ allows for the following representations:
    \begin{align}
      \var{P}{G}{d,\alpha}(x)&=(-1)^{d-1}x\sum_{j=1}^d\biggl(s(d,j)\sum_{k=0}^{j-1}S(j,k+1)(-x)^k\biggr)\alpha^j\label{stirling}\\
&=(-1)^{d-1}\alpha x\sum_{j=0}^{d-1}\biggl(s(d,j+1)\sum_{k=0}^{j-1}S(j,k+1)(-x)^{k}\biggr)\alpha^j\label{stirling.horner}\\
      &=\sum_{j=1}^{d}s_j\exp\bigl(\log\lvert(\alpha j)_d\rvert+j\log x+x-\log(j!)+\log F^{\text{Poi}(x)}(d-j)\bigr),\label{pois}
    \end{align}
    where, for all $j\in\{1,\dots,d\}$,
    \begin{align}
      s_j=\begin{cases}\label{signFF}
        (-1)^{j-\lceil\alpha j\rceil},&\alpha j\notin\IN\ \text{or}\ (\alpha=1,\ j=d),\\
        0,&\text{otherwise},
      \end{cases}
    \end{align}
    and where $F^{\text{Poi}(x)}(\cdot\,)$ denotes the distribution function of a Poisson distribution with parameter $x$.
  \end{enumerate}
\end{lemma}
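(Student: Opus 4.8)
The plan is to base everything on the two closed forms for the coefficients $\var{a}{G}{dk}(\alpha)$ in \eqref{polyG.coeff.stir} and \eqref{polyG.coeff.binom}, whose equality I would record first: inserting the falling-factorial expansion $\binom{\alpha j}{d}=\frac1{d!}\sum_{\ell}s(d,\ell)\alpha^{\ell}j^{\ell}$ into \eqref{polyG.coeff.binom}, interchanging the $j$- and $\ell$-sums, and using the explicit formula $k!\,S(\ell,k)=\sum_{j=0}^{k}(-1)^{k-j}\binom{k}{j}j^{\ell}$ for Stirling numbers of the second kind, the binomial form collapses to the Stirling form once one notes $s(d,0)=0$ for $d\ge1$ and $S(\ell,k)=0$ for $\ell<k$. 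Part \ref{dsumSibuya} is then the heart of the matter. Writing $[z^{d}]$ for coefficient extraction and expanding $(1-(1-z)^{\alpha})^{k}=\sum_{j=0}^{k}\binom{k}{j}(-1)^{j}(1-z)^{\alpha j}$, one reads off
\begin{align*}
  \frac{k!}{d!}\var{a}{G}{dk}(\alpha)=\sum_{j=1}^{k}\binom{k}{j}(-1)^{d-j}\binom{\alpha j}{d}=[z^{d}]\bigl(1-(1-z)^{\alpha}\bigr)^{k}.
\end{align*}
Since $1-(1-z)^{\alpha}$ is exactly the probability generating function of the $\Sibuya(\alpha)$ distribution, the right-hand side is the p.g.f.\ of a sum of $k$ i.i.d.\ $\Sibuya(\alpha)$ variables, read through its coefficients; hence $\var{p}{J}{01}(d;k):=\frac{k!}{d!}\var{a}{G}{dk}(\alpha)$ is a probability mass function in $d\in\{k,k+1,\dots\}$ (it is supported on $\{k,k+1,\dots\}$ because each summand is at least $1$, and it sums to one because the p.g.f.\ equals $1$ at $z=1$), and it is strictly positive there since for $\alpha\in(0,1)$ the law $\Sibuya(\alpha)$ has full support on $\IN$, so its $k$-fold convolution has full support on $\{k,k+1,\dots\}$ (for $\alpha=1$ it degenerates to the point mass at $d=k$).

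For Part \ref{polyG.repr}, representation \eqref{stirling} comes out of substituting the Stirling form \eqref{polyG.coeff.stir} into $\var{P}{G}{d,\alpha}(x)=\sum_{k=1}^{d}\var{a}{G}{dk}(\alpha)x^{k}$, swapping the $k$- and $j$-summations, and reindexing the inner sum $k\mapsto k+1$, which extracts a factor $x$ and turns $\sum_{k}(-1)^{d-k}S(j,k)x^{k}$ into $(-1)^{d-1}x\sum_{k}S(j,k+1)(-x)^{k}$; \eqref{stirling.horner} is then \eqref{stirling} with one power of $\alpha$ factored out and the outer index relabelled, the form used for Horner evaluation in $\alpha$. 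For \eqref{pois} I would again start from \eqref{polyG.coeff.binom}: a sign count of the falling factorial $(\alpha j)_{d}=\prod_{i=0}^{d-1}(\alpha j-i)$ shows it vanishes precisely when $\alpha j\in\{1,\dots,d-1\}$ and otherwise has sign $(-1)^{d-\lceil\alpha j\rceil}$, so $(-1)^{d-j}\binom{\alpha j}{d}=s_{j}\lvert\binom{\alpha j}{d}\rvert$ with $s_{j}$ as in \eqref{signFF}. Using $\exp\!\bigl(x+\log F^{\text{Poi}(x)}(d-j)\bigr)=\sum_{m=0}^{d-j}x^{m}/m!$, the asserted sum in \eqref{pois} equals $\sum_{j=1}^{d}s_{j}\lvert(\alpha j)_{d}\rvert\frac{x^{j}}{j!}\sum_{m=0}^{d-j}\frac{x^{m}}{m!}$; gathering the coefficient of $x^{k}$ over all $j,m$ with $j+m=k$ reproduces $\frac{d!}{k!}\sum_{j=1}^{k}\binom{k}{j}(-1)^{d-j}\binom{\alpha j}{d}=\var{a}{G}{dk}(\alpha)$, which proves \eqref{pois}. (Its purpose is numerical: $\lvert(\alpha j)_{d}\rvert$, $x^{j}$ and $F^{\text{Poi}(x)}(d-j)$ are all handled in log-space, so the large intermediate quantities behind $\var{P}{G}{d,\alpha}$ never get formed.)

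Almost all of this is routine: the two interchanges of summation, the two standard Stirling-number identities, the sign analysis of $(\alpha j)_{d}$, and the final coefficient comparison, none of which I would write out in detail. The one step carrying real content is the generating-function identification $\frac{k!}{d!}\var{a}{G}{dk}(\alpha)=[z^{d}](1-(1-z)^{\alpha})^{k}$ in Part \ref{dsumSibuya}: once it is in place, both the nonnegativity — which is not at all visible from the alternating-sum expression \eqref{polyG.coeff.binom} — and the normalization are immediate from the $\Sibuya(\alpha)$ interpretation, and this is where I expect the main obstacle to lie. A minor additional point to get right is the degenerate behaviour at $\alpha=1$ and at $\alpha j\in\IN$, where single falling factorials vanish and the corresponding $s_{j}$ in \eqref{pois} must be set to zero.
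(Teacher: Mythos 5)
Your proposal is correct, and for Part \ref{polyG.repr} it follows essentially the paper's own route: \eqref{stirling} and \eqref{stirling.horner} by interchanging the $k$- and $j$-sums in \eqref{polyG} with \eqref{polyG.coeff.stir}, and \eqref{pois} by rewriting $\binom{\alpha j}{d}$ as $(\alpha j)_d/d!$, recognizing the truncated exponential series as $e^{x}F^{\text{Poi}(x)}(d-j)$, and doing the same case analysis of $\sign\prod_{l}(\alpha j-l)$ to get \eqref{signFF} (you verify \eqref{pois} by expanding it back and matching coefficients of $x^k$, the paper derives it forward; these are the same computation). The genuine difference is Part \ref{dsumSibuya}: the paper disposes of it by citing an external result (the identification of $\var{p}{J}{01}(\cdot\,;k)$ with the mass function whose Laplace--Stieltjes transform is the generator $\psi_{01}(\cdot\,;k)$ of a nested Joe copula), whereas you give a self-contained argument via $\frac{k!}{d!}\var{a}{G}{dk}(\alpha)=[z^d]\bigl(1-(1-z)^{\alpha}\bigr)^{k}$, i.e.\ the $k$-fold convolution of $\Sibuya(\alpha)$ laws. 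This buys a proof of the positivity and normalization that the reader can check without consulting the reference, at the cost of having to know (or prove) that $1-(1-z)^{\alpha}$ is a probability generating function; the underlying mathematics is the same, as the code name \code{dsumSibuya} suggests. One point where you are actually more careful than the paper: at $\alpha=1$ the Sibuya law degenerates to the point mass at $1$, so $\var{a}{G}{dk}(1)=0$ for $k<d$ and the strict inequality $\var{p}{J}{01}(d;k)>0$ claimed ``for all $\alpha\in(0,1]$'' only holds for $\alpha\in(0,1)$ (or for $d=k$); you flag this degeneracy explicitly, the paper does not.
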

\begin{proof}
  Part \ref{dsumSibuya} of Lemma \ref{lem.polyG} follows from \cite[p.\ 99]{hofert2010c} (the probability mass function $\var{p}{J}{01}(d;k)>0$ corresponds to the distribution function $F_{01}(\cdot\,;k)$ whose Laplace-Stieltjes transform is the generator $\psi_{01}(\cdot\,;k)$ appearing in a nested Joe copula). In particular, this equality implies that the coefficients $\var{a}{G}{dk}(\alpha)$ of $\var{P}{G}{d,\alpha}$ are positive. This allows one to apply (\ref{lsum}) with $b_k=\log\var{a}{G}{dk}(\alpha)+k\log x$, $k\in\{1,\dots,d\}$ ($d=n$), to compute the logarithm of the polynomial $\var{a}{G}{dk}(\alpha)$ at $x$. Concerning Part \ref{polyG.repr}, Equations (\ref{stirling}) and (\ref{stirling.horner}) directly follow from interchanging the order of summation of (\ref{polyG}) combined with (\ref{polyG.coeff.stir}). For Equation (\ref{pois}), note that interchanging the order of summation of (\ref{polyG}) combined with (\ref{polyG.coeff.binom}) and rewriting the generalized binomial coefficient $\binom{\alpha j}{d}$ as $(\alpha j)_d/j!$ leads to
  \begin{align*}
    \var{P}{G}{d,\alpha}(x)=\sum_{j=1}^{d}(\alpha j)_d(-1)^{d-j}\frac{x^j\exp(x)}{j!}\sum_{k=0}^{d-j}\frac{x^k}{k!}\exp(-x).
  \end{align*}
  Interpreting the second sum as $F^{\text{Poi}(x)}(d-j)$, pulling out the signs $s_j=\sign((\alpha j)_d(-1)^{d-j})$, and bringing in an $\exp(\log(\cdot))$ leads to the result as stated. Concerning the formula for $s_j$, note that
  \begin{align}
    s_j&=\sign((\alpha j)_d(-1)^{d-j})=(-1)^{d-j}\sign\prod_{l=1}^{d-1}(\alpha j-l)=(-1)^{d-j}\prod_{l=1}^{d-1}\sign(\alpha j-l)\notag\\
&=(-1)^{d-j}\prod_{l=\lceil\alpha j\rceil}^{d-1}\sign(\alpha j-l).\label{signFF.repr}
  \end{align}
  First assume $\alpha j\notin\IN$ and $\lceil\alpha j\rceil\le d-1$. In this case $s_j=(-1)^{d-j}(-1)^{d-1-\lceil\alpha j\rceil+1}=(-1)^{j-\lceil\alpha j\rceil}$. This is also true if $\lceil\alpha j\rceil=d$ since then $j$ has to be equal to $d$. Now consider $\alpha=1$ and $j=d$. In this case, it is easily seen from (\ref{signFF.repr}) that $s_j=1$ which is also true for (\ref{signFF}). Finally, consider the second case in (\ref{signFF}). It implies that $\alpha j\in\IN$ and thus the first factor in (\ref{signFF.repr}) being zero due to $l=\lceil\alpha j\rceil=\alpha j$, so $s_j=0$. Note the interesting fact that the formula for $s_j$ is independent of $d$.
\end{proof}

The results presented in Lemma \ref{lem.polyG} lead to the following different methods for computing the logarithm of $\var{P}{G}{d,\alpha}$, see the (non-exported) function \code{polyG} of the \R{} package \pkg{copula}:
\begin{itemize}
\item\code{pois}, \code{pois.direct}: These methods are based on Representation (\ref{pois}), where \code{pois} applies the intelligent logarithm (\ref{lssum}) to evaluate the sum and \code{pois.direct} computes the sum directly as given in (\ref{pois});
\item\code{stirling}, \code{stirling.horner}: Method \code{stirling} evaluates
  Representation (\ref{stirling}) directly, where the polynomial
  $\sum_{k=0}^{j-1}S(j,k+1)(-x)^k$ in $-x$ is computed via Horner's
  scheme. Method \code{stirling.horner} is based on Representation
  (\ref{stirling.horner}), where Horner's scheme is applied to compute the
  polynomial in $\alpha$ with coefficients $s(d,j+1)\sum_{k=0}^{j-1}S(j,$ $k+1)(-x)^{k}$ which, as before, are evaluated with Horner's scheme.
\item\code{sort}, \code{horner}, \code{direct}, and \code{dsSib.*}: These methods all $\var{P}{G}{d,\alpha}$ with the intelligent logarithm (\ref{lsum}) based on the logarithms of the coefficients $\var{a}{G}{dk}(\alpha)$ (note that the coefficients $\var{a}{G}{dk}(\alpha)$ of  are all positive). The logarithmic coefficients can be obtained in different ways: \code{sort} computes them via (\ref{lssum}); \code{horner} via Horner's scheme based on interpreting (\ref{polyG.coeff.stir}) as a polynomial in $-\alpha$; \code{direct} by directly computing the sum as given in (\ref{polyG.coeff.stir}); and \code{dsSib.*} by various different methods described on the help page of the function \code{dsumSibuya} (for example, \code{dsSib.log} uses (\ref{polyG.coeff.binom}) together with the intelligent logarithm as given in (\ref{lssum})).
\end{itemize}
Additionally, a method \code{default} is implemented which consists of a careful combination of the above methods based on numerical experiments. Note that all methods involved work with $\log x$ instead of $x$.
For this reason, \code{polyG} requires as argument $\log x$ rather than $x$.

Finally, let us mention that the problem of evaluating sums of type
\begin{align}\label{sumBinom}
  \sum_{k=0}^n {n\choose k}(-1)^k f_k
\end{align}
for sequences $(f_k)_k$ has a long history (note that (\ref{polyG.coeff.binom}) falls under this setup).
They can be interpreted as forward differences and are known to be
numerically challenging. Approximate (asymptotic for $n\to\infty$) formulas
may be obtained by using methods from complex analysis; see, for example,
\cite{flajoletsedgewick1995}, and have been important, e.g., for estimating
the complexity of computer algorithms.

However, these asymptotic formulas are not very accurate for finite $n$
(note that in our case, $n=d$, the data dimension) and the only known
way to accurately compute them, seems high precision arithmetic.
See \code{sumBinomMpfr()} in \R{} package \pkg{Rmpfr}, and its
documentation for simple examples such as $f_k = f(k) = \sqrt{k}$.

\subsection{Kendall's tau for Ali-Mikhail-Haq copulas}
In Table~\ref{tab.gen}, the population version of Kendall's tau
for the Ali-Mikhail-Haq (A) family, as function of the parameter $\theta$,
is
\begin{align}
  \label{eq:tau.AMH}
  \tau_{\text{A}}(\theta) = 1-2(\theta+(1-\theta)^2\log(1-\theta))/(3\theta^2).
\end{align}
When computing it for the Kendall's tau estimator (see Section~\ref{sec.tau}), however,
the simple formula (\ref{eq:tau.AMH}) is not sufficient, notably not for
small $\theta$, see the left plot in Figure~\ref{fig.AMHtau}:  Replacing
$\log(1-\theta)$ by its numerical accurate $\mathrm{log1p}(-\theta)$ helps
down to around $\theta\approx 10^{-7}$, but then that formula breaks down
as well, and indeed our \code{tauAMH()} (package \pkg{copula}), uses
parts of the Taylor series $\tau_{\text{A}}(\theta) = \frac{2}{9}\theta(1 + \theta(\frac 1 4 +
    \frac{\theta}{10}(1 + \theta(\frac 1 2 + \theta \frac 2 7)))) +
    O(\theta^6)$%
    \footnote{replacing $\log(1-\theta)$ by its Taylor
      expansion $- \sum_{k=1}^\infty \theta^k/k$ in (\ref{eq:tau.AMH})
      results in the expansion $\tau_{\text{A}}(\theta) = \frac 2 9
      \theta\sum_{k=0}^\infty \frac{6}{(k+1)(k+2)(k+3)}\,\theta^k$},
    as soon as $\theta \le 10^{-2}$.
\begin{figure}[htbp]
  \centering
  \includegraphics[width=0.485\textwidth]{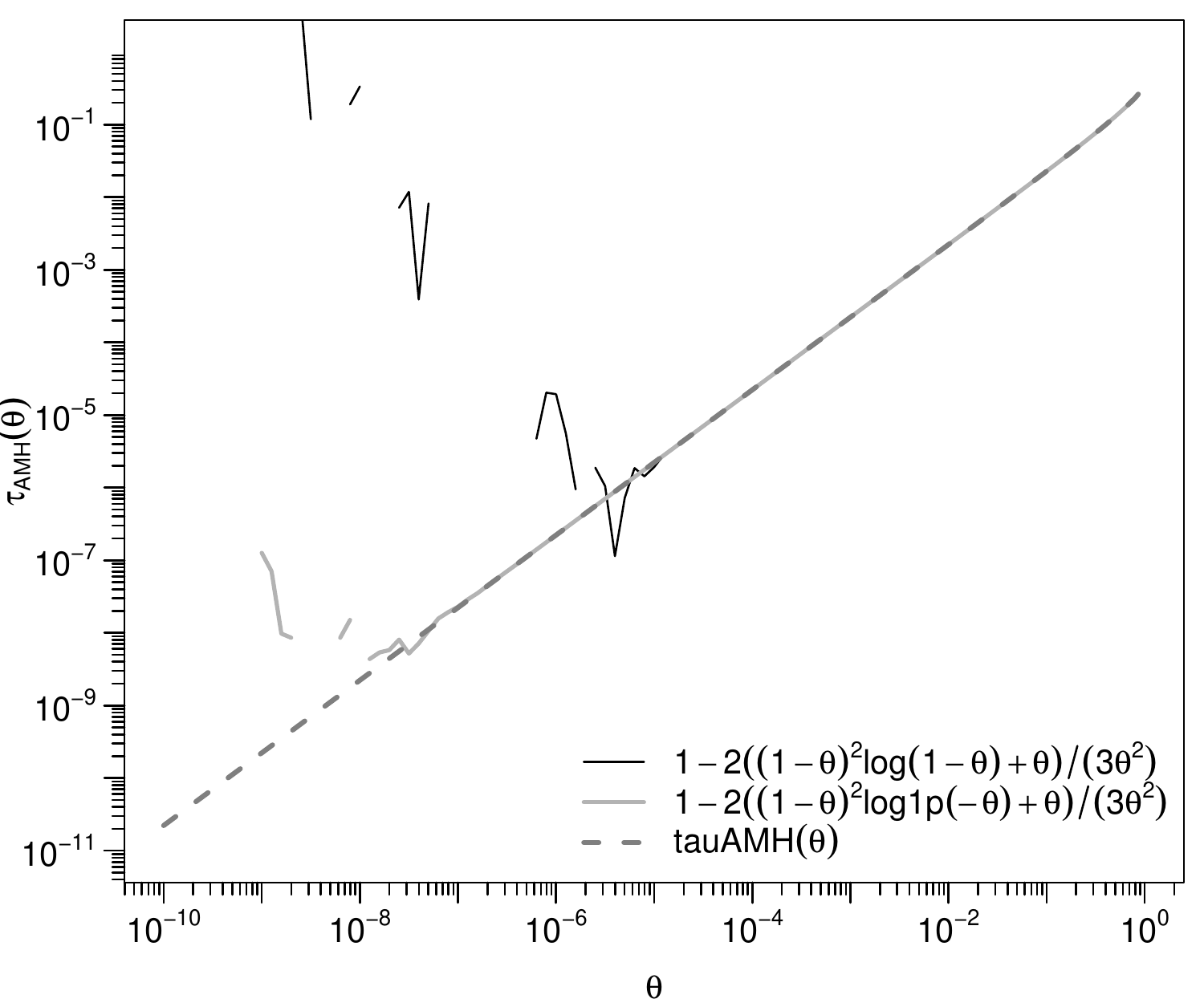}%
  \hfill
  \includegraphics[width=0.485\textwidth]{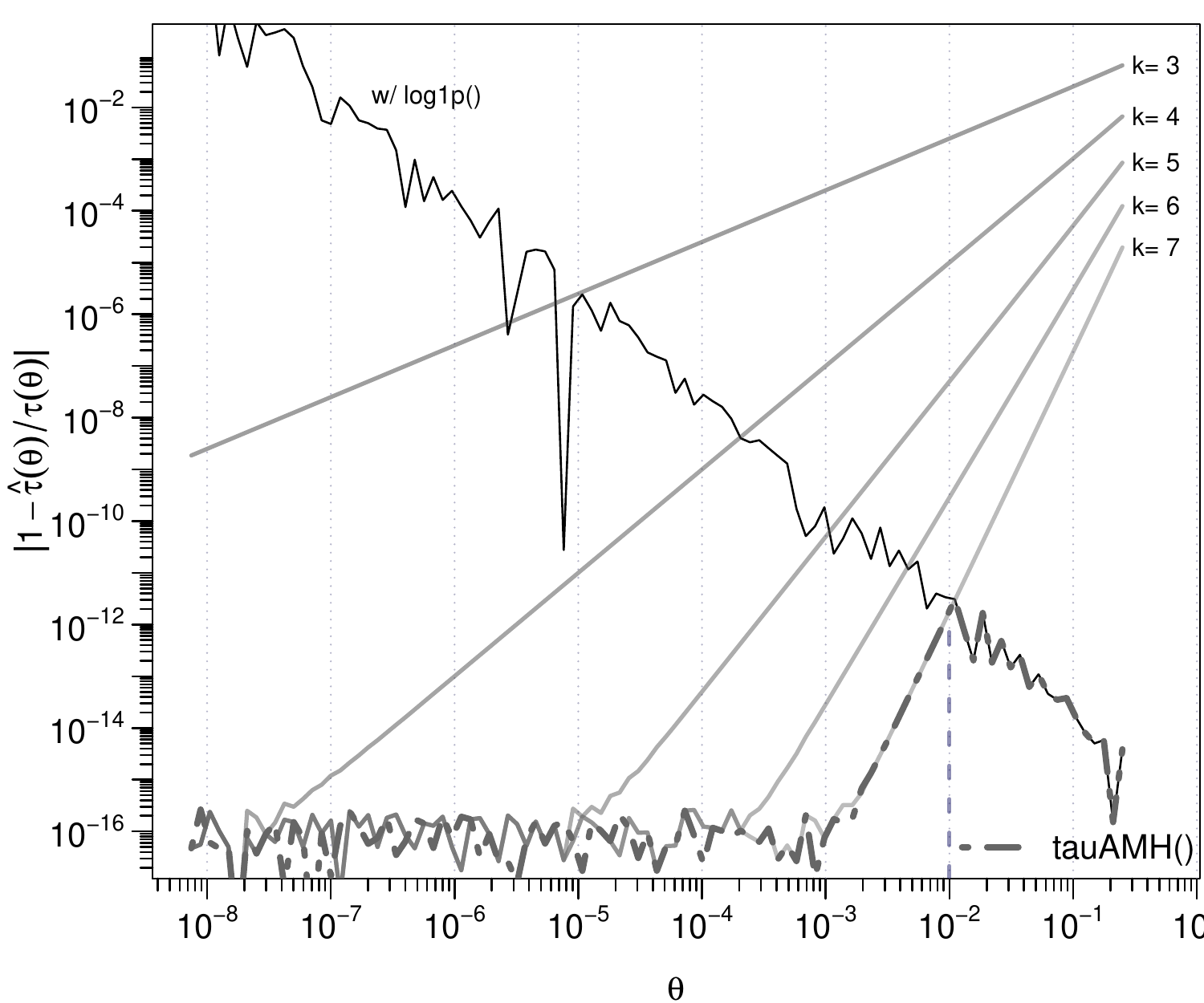}
  \caption{left: $\tau_{\text{A}}(\theta)$: direct form (breaking down for $\theta<10^{-5}$), using \code{log1p()} (okay down to $\theta\approx 10^{-8}$), correct approximation as provided by \code{tauAMH()}; right: relative errors of \code{log1p()}, Taylor approximations, and our hybrid \code{tauAMH()}.}
  \label{fig.AMHtau}
\end{figure}

\subsection{log1mexp}\label{sec:log1mexp}
There are several situations, such as the one addressed in Section \ref{sec.num.dfrank}, where an accurate computation of
\begin{align}
  \label{eq:log1mexp}
  f(a) = \mathrm{log1mexp}(a) = \log(1 - \exp(-a)),\quad a \ge 0,
\end{align}
is required. Note that this is numerically challenging in both situations, when $a \downarrow
0$ (hence $\exp(-a) \uparrow 1$ and cancellation of two almost equal terms in $1
- \exp(-a)$), and when $a\uparrow\infty$, as $\exp(-a) \downarrow 0$ and in $1 -
\exp(-a)$, almost all accuracy of $\exp(-a)$ is lost when it is less than
around $10^{-15}$.
Now, for the first case, we can make use of the \R{} and C library function
\code{expm1(x)} which computes $\exp(x) - 1$ accurately also for very
small $x$, and for the second case, use the \R{} and C library function
\code{log1p(x)} which computes $\log(1 + x)$ accurately also for very
small $x$.  Our package \pkg{copula} provides the function
\code{log1mexp()} which adapts to these two cases, in a sense, optimally
by using a cutoff of $a = \log 2$; see \cite{maechler2012}.

\subsection{The density of the diagonal of Frank copulas}\label{sec.num.dfrank}
Computing the DMLE for Frank's copula family is one situation where the accurate computation of \eqref{eq:log1mexp} is crucial. To compute the density $\delta^\prime_{\theta}$ of the diagonal \eqref{diagdensity} for Frank copulas with generator $\psi_{\theta}(t)=-\log(1-(1-\exp(-\theta))e^{-t})/\theta$, the functions
\begin{gather*}
  -\psi^\prime_{\theta}(t)=\frac{(1-e^{-\theta})\exp(-t)}{\theta(1-(1-e^{-\theta})\exp(-t))},\\
 \psiis{\theta}(u)=-\log\biggl(\frac{\exp(-u\theta)-1}{e^{-\theta}-1}\biggr),\quad\text{and} -(\psiis{\theta})^\prime(u)=\frac{\theta}{\exp(\theta u)-1}
\end{gather*}
are involved. Numerical issues in computing $\delta^\prime_{\theta}(u)$ arise for large $\theta$ and $u$ close to 1. It is known that numerically, the computation of $e^x-1$ suffers from cancellation when $0<x\ll 1$. The first suspect is thus $\psiis{\theta}$ which involves terms of this type. The left-hand side of Figure \ref{fig.dFrank.psiInv} displays $\log\delta^\prime_{\theta}(u)$ for $\theta=38$ and $d=2$ for two different versions of computing $\psiis{\theta}$: \code{psiInv.0} uses only the \R{} functions \code{log()} and \code{exp()}, whereas \code{psiInv.1} uses \code{log()} and \code{expm1()}. Either way, numerical issues appear due to the cancellation in the division of terms of type $e^x-1$ when computing $\psiis{\theta}$. By rewriting $\psiis{\theta}$ as
\begin{align*}
  \psiis{\theta}(u)=-\log\biggl(1-\frac{\exp(-u\theta)-e^{-\theta}}{1-e^{-\theta}}\biggr)
\end{align*}
we can use \R{}'s function \code{log1p(.)} to accurately compute $\log(1+\cdot)$ and thus $\psiis{\theta}$ via \code{-log1p((exp(-u*theta)-exp(-theta))/expm1(-theta))} which we denote by \code{psiInv.2}. The right-hand side of Figure \ref{fig.dFrank.psiInv} displays the effect of using \code{psiInv.2} in comparison to \code{psiInv.0} and \code{psiInv.1}.

\begin{figure}[htbp]
  \centering
  \includegraphics[width=0.485\textwidth]{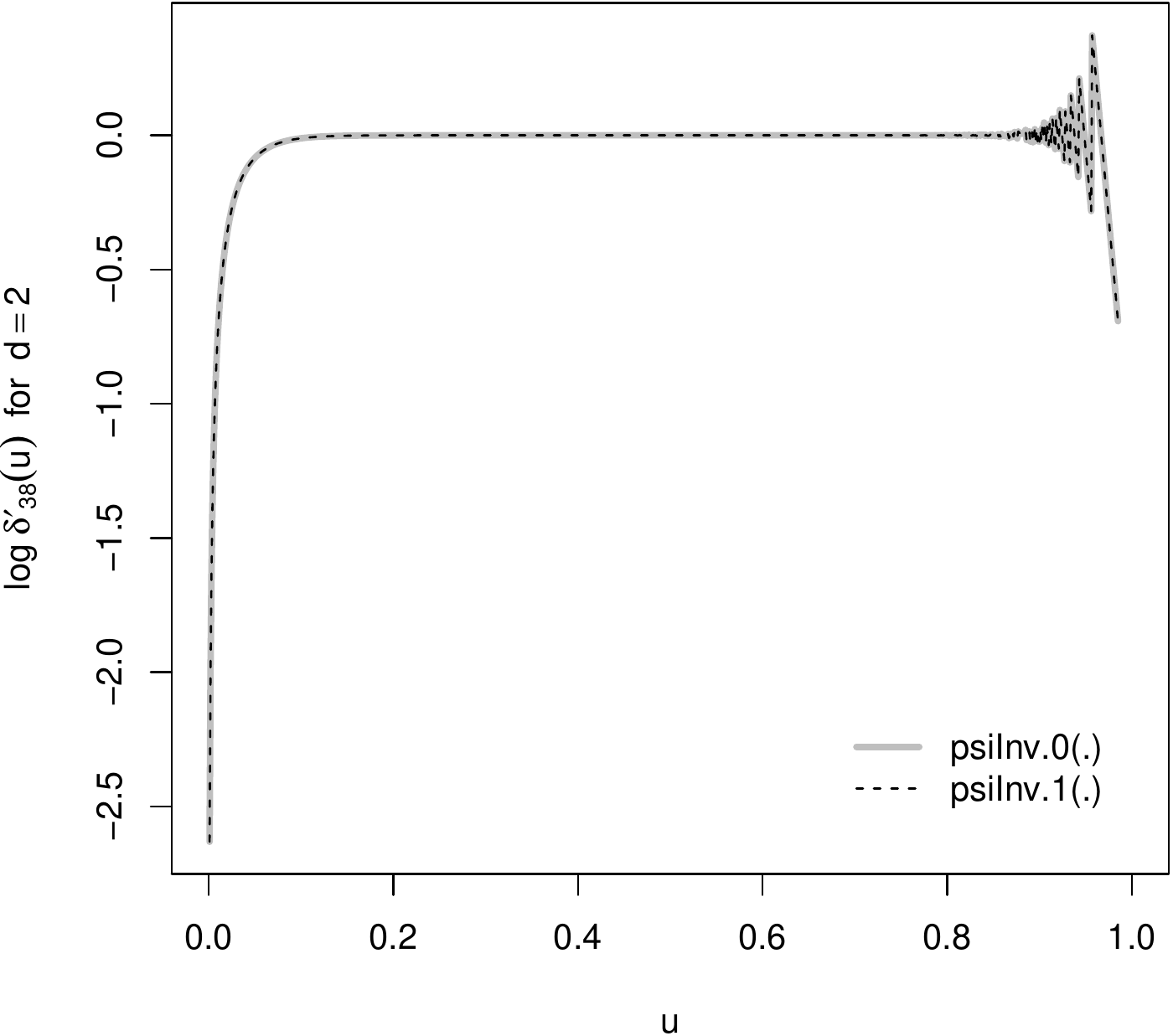}%
  \hfill
  \includegraphics[width=0.485\textwidth]{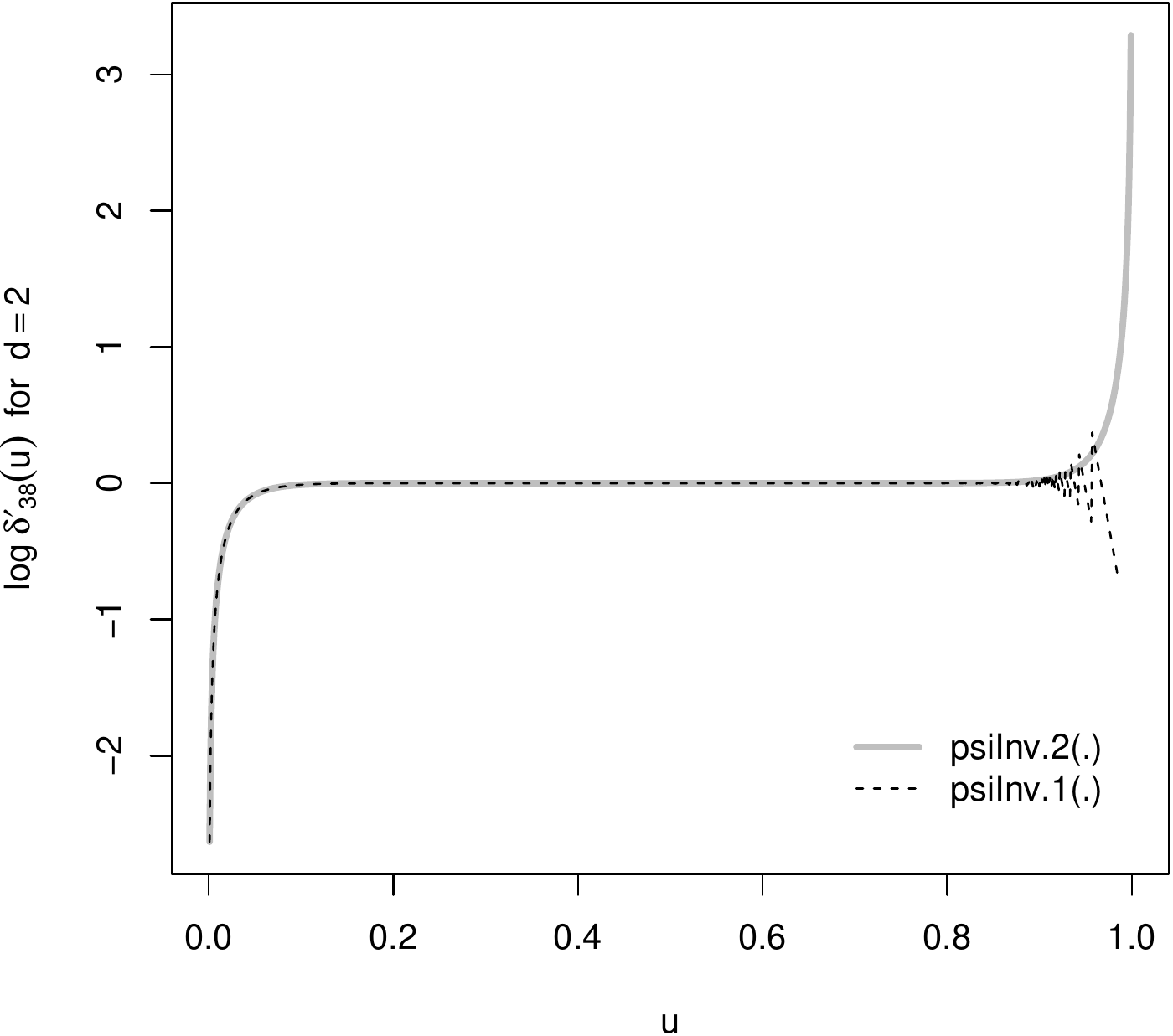}
  \caption{$\log\delta^\prime_{\theta}(u)$ for $\theta=38$ and $d=2$ for computing $\psiis{\theta}$ via \R{}'s \code{log()} and \code{exp()} functions (\code{psiInv.0}), a version with \code{log()} and \code{expm1()} (\code{psiInv.1}), and a version using \code{log1p()} and \code{expm1()} (\code{psiInv.2}).}
  \label{fig.dFrank.psiInv}
\end{figure}
Although this already looks promising, it is still not possible to compute the negative log-likelihood for the DMLE of Frank's copula family for a large range of parameters $\theta$ as one would like to do for the optimization. The left-hand side of Figure \ref{fig.dFrank.psiDabs} shows the negative log-likelihood based on the diagonal of a five-dimensional (so rather low-dimensional) Frank copula, where computations are done in double precision and high-precision arithmetic with different significant bits (this was done with the \R{} package \pkg{Rmpfr}). As it turns out, the problem is the evaluation of $-\psi^\prime_{\theta}(t)$ for small $t$ (equivalently, $t=\psiis{\theta}(u)$ for large $\theta$ and $u$ close to 1 as before). The solution is to rewrite the logarithm of $-\psi^\prime_{\theta}(t)$ via
\begin{align*}
  \log(-\psi^\prime_{\theta}(t))&=\log(1-e^{-\theta})-t-\log(\theta)-\log(1-(1-e^{-\theta})\exp(-t))\\
  &=\mathrm{log1mexp}(\theta)-t-\log(\theta)-\mathrm{log1mexp}\bigl(-\log((1-e^{-\theta})\exp(-t))\bigr)\\
  &=\mathrm{log1mexp}(\theta)-t-\log(\theta)-\mathrm{log1mexp}\bigl(t-\log(1-e^{-\theta})\bigr)\\
  &=w-\log(\theta)-\mathrm{log1mexp}(-w),
\end{align*}
where $w=\mathrm{log1mexp}(\theta)-t$. By computing $\mathrm{log1mexp}$ via
\code{log1mexp()} as described in Section~\ref{sec:log1mexp}, one can then
accurately compute the negative log-likelihood for the  DMLE for Frank's
copula family; see the right-hand side of
Figure~\ref{fig.dFrank.psiDabs}.  For more details we refer the interested
reader to \cite{Maechler2011}\footnote{As this is a vignette
  of \R{} package \pkg{copula}, all its figures are completely reproducible
  via \R{} code in the file \texttt{Frank-Rmpfr.Rnw} which is part of the
  package source.}.
\begin{figure}[htbp]
  \centering
  \includegraphics[width=0.485\textwidth]{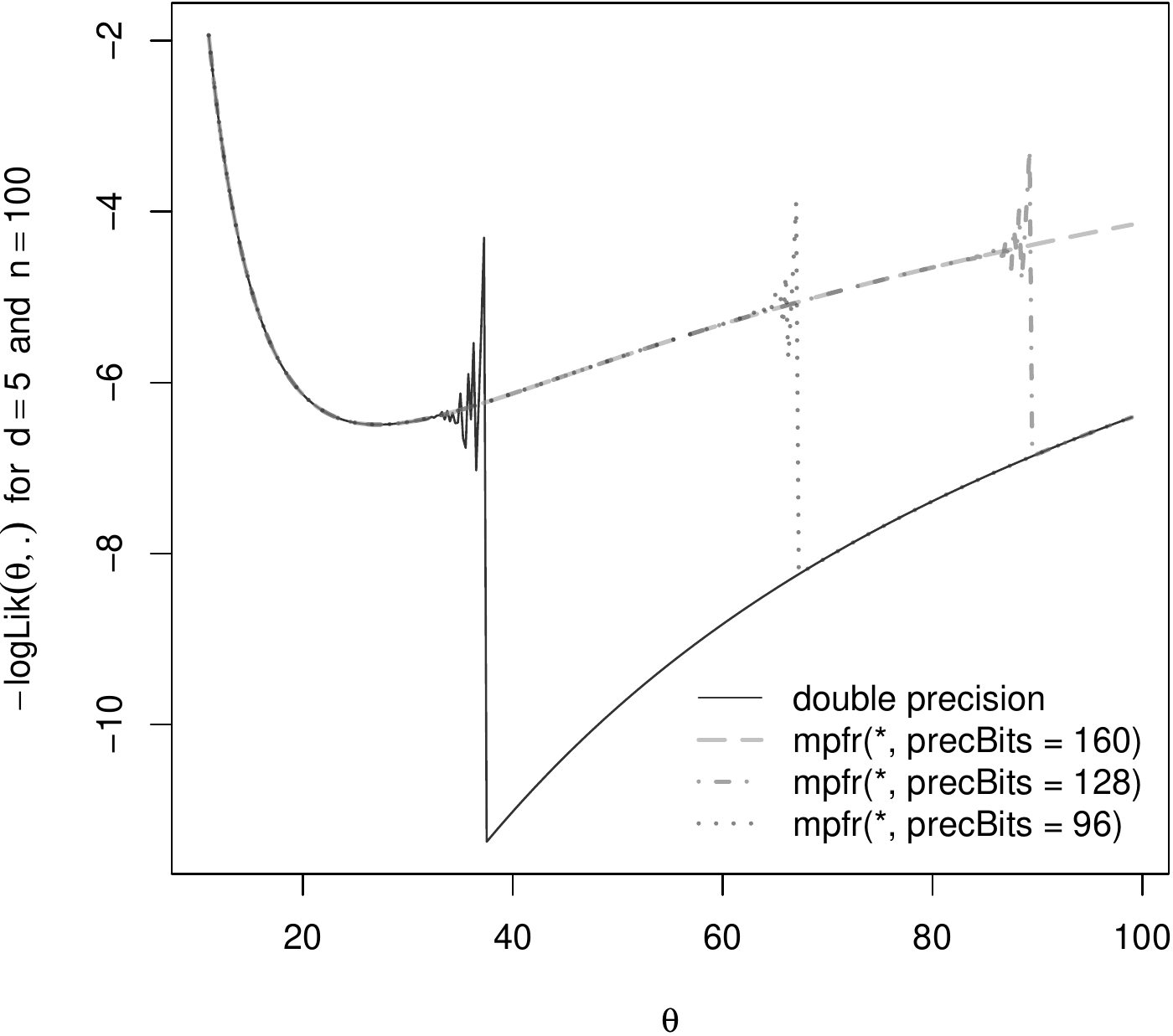}%
  \hfill
  \includegraphics[width=0.485\textwidth]{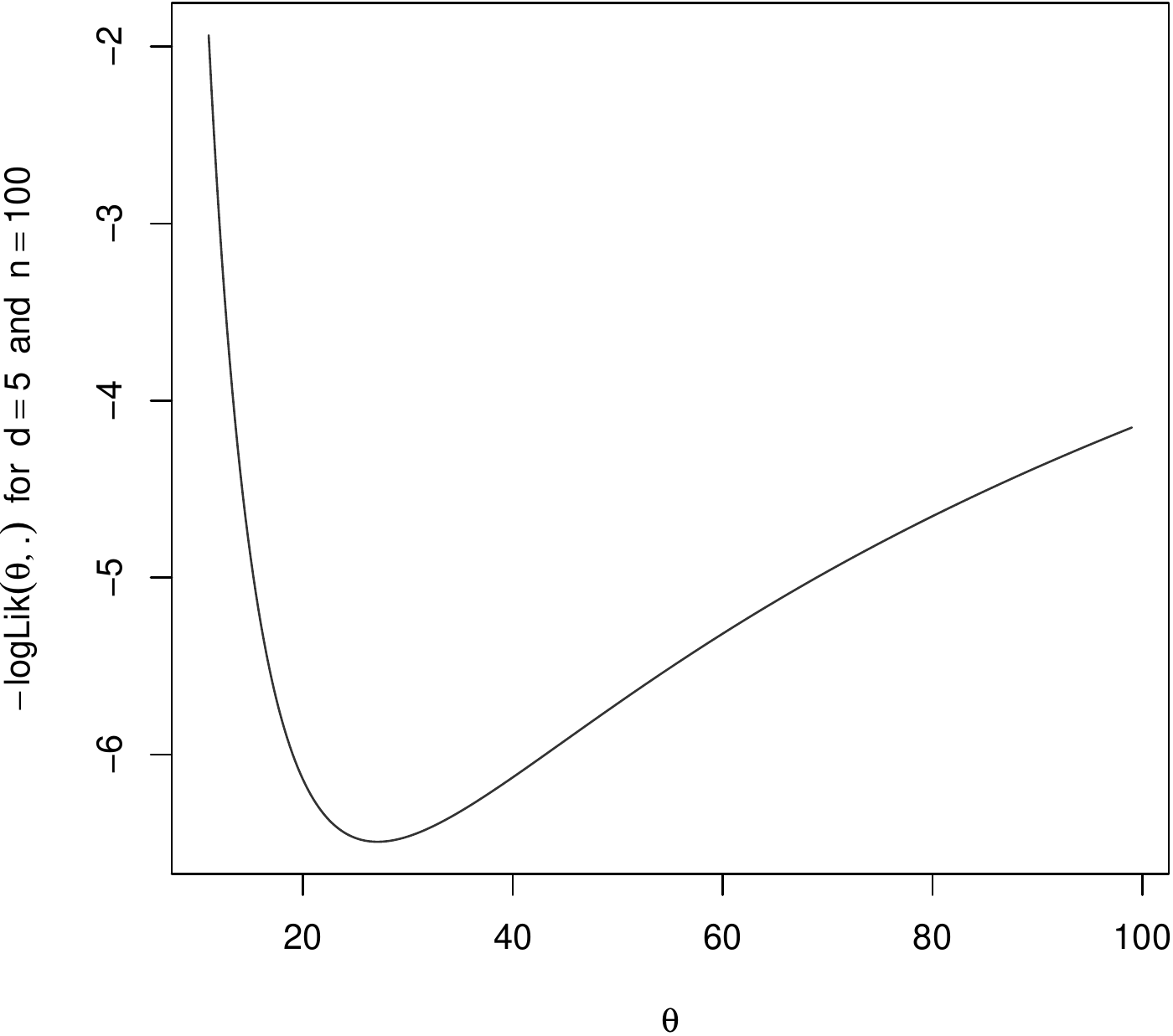}
  \caption{Negative log-likelihood based on the diagonal of a five-dimensional Frank copula (sample size $n=100$) with various kinds of precision (left) and an accurate version in double precision based on $\mathrm{log1mexp}$ (right).}
  \label{fig.dFrank.psiDabs}
\end{figure}

\section{Conclusion}\label{sec.con}
We introduced and compared different parametric estimators for Archimedean copula families with focus on large dimensions (up to $d=100$). In particular, estimators based on Kendall's tau, Blomqvist's beta, minimum distance estimators, the maximum-likelihood estimator, a simulated maximum-likelihood estimator, and a maximum-likelihood estimator based on the copula diagonal were investigated both under known and unknown margins (pseudo-observations). Several of these estimation methods were newly introduced and investigated in this context.

Under known margins, the best performance according to precision was shown by the maximum-likelihood estimator. To our surprise, the maximum-likelihood estimator also performed well according to numerical stability (being of similar numerical stability as the pairwise Kendall's tau estimators) and run time (being only outperformed by the diagonal maximum likelihood estimator). Under unknown margins, the MLE still performed best, but the differences in precision between the various estimators are much less clear-cut and the rate of improvement in $d$ is not as high as under known margins.

Our work specifically addressed the challenges of inference in large dimensions which is important for practical applications. Large dimensions up to $d=100$ were tackled for the first time and numerical challenges when working in such large dimensions were addressed in detail. Moreover, a detailed implementation of the presented estimation methods in the \R{} package \pkg{copula} creates transparency and allows the reader to access and verify our results.
\appendix

\section{Appendix}\label{sec.app}
\KOMAoptions{paper=landscape}%
\recalctypearea%
\areaset[current]{280mm}{130mm}
\begin{table}[htbp]
  \centering\scriptsize

  \caption{Mean user run times (under unknown margins) in milliseconds. The numbers in parentheses denote the factors of the corresponding entries with respect to the performance of the MLE.}
  \label{tab.mut}
\end{table}
\KOMAoptions{paper=portrait}%
\restoregeometry%
\recalctypearea%
\printbibliography[heading=bibintoc]
\end{document}